\documentclass{article}
\usepackage{fullpage,subfigure}
\usepackage{multirow}
\usepackage{amsmath,amssymb, latexsym,graphicx}
\usepackage{url}
\usepackage{graphicx}
\usepackage[linesnumbered,lined,ruled,noend]{algorithm2e}
\usepackage{array}
\usepackage{color}
\usepackage{stmaryrd}



\newcommand{\sizep}{\mathrm{size}}
\newcommand{\comp}{\mathrm{comp}}

\newcommand{\removed}[1]{}

\newcommand{\tobinary}{\textsc{ToBinary}}

\newcommand{\Primes}{\mathcal{P}}

\newcommand{\iz}{indicator}
\newcommand{\ind}{indicator}
\newcommand{\each}{\emph{\bf each} }

\newcommand{\rr}{p}

\newcommand{\enc}[1]{\llbracket #1 \rrbracket}
\newcommand{\Decodesketch}{\textsc{Decode}}

\newcommand{\ZZ}{{\mathbb Z}}

\newcommand{\spirit}{\mathrm{SPiRiT}}

\newcommand{\cost}{array}

\newcommand{\br}[1]{\left\{#1\right\}}

\newcommand{\danny}[1]{}
\newcommand{\anote}[1]{\textcolor{red} {}}
\newcommand{\remove}[1]{}

\newcommand{\f}{\frac}

\newcommand{\sq}{\hbox{\rlap{$\sqcap$}$\sqcup$}}
\newcommand{\qed}{\hspace*{\fill}\sq}
\newcommand{\eps}{\ensuremath{\varepsilon}}                       
\newcommand{\REAL}{\ensuremath{\mathbb{R}}}                       
\newcommand{\RR}{{\REAL}}

\newenvironment{proof}{\noindent {\bf Proof}.\ }{\qed \par\vskip 4mm\par}

\newtheorem{theorem}{Theorem}[section]

\newtheorem{lemma}[theorem]{Lemma}

\newtheorem{definition}[theorem]{Definition}

\newtheorem{remark}[theorem]{Remark}
\newtheorem{claim}[theorem]{Claim}

\newcommand{\size}{s}

\newcommand{\oneWitness}{\textsc{SearchCoresetItem}}

\newcommand{\isZero}{is\textsc{Zero}}
\newcommand{\isEq}{is\textsc{Equal}}

\newcommand{\isMatch}{{is\textsc{Match}}}

\newcommand{\isPositive}{\textsc{isPositive}}

\newcommand{\sfirstindex}{\textsc{SPiRiT}}
\newcommand{\securefirstindex}{\textsc{SecureSearch}}
\newcommand{\securesearch}{\textsc{SecureSearch}}

\newcommand{\tree}{\mathcal{T}}
\def\parents{\textsf{Ancestors}}
\def\lop{\textsf{Siblings}}

\newcommand\sett[2]{\left\{ \left. #1 \;\right\vert #2 \right\}}
\newcommand\set[1]{{\left\{ #1 \right\}}}

\newcommand\card[1]{\left| #1 \right|}
\newcommand\details[1]{}
\def\zo{\set{0,1}}
\def\ie{i.e.}
\def\eg{e.g.}

\newcommand{\ceil}[1]{\lceil #1\rceil}

\newcommand{\emailAdi}{akavia@mta.ac.il}
\newcommand{\emailDanny}{dannyf.post@gmail.com}
\newcommand{\emailHayim}{hayim.shaul@gmail.com}
\newcommand{\instAdi}{Cybersecurity Research Center, Academic College of Tel-Aviv Jaffa}
\newcommand{\instDanny}{Robotics \& Big Data Lab, University of Haifa}
\newcommand{\instHayim}{Robotics \& Big Data Lab, University of Haifa}
\newcommand{\grantsAdi}{}
\newcommand{\grantsDanny}{}
\newcommand{\grantsHayim}{}

\title{Secure Search on the Cloud via Coresets and Sketches}

\author{Adi Akavia\thanks{\instAdi. Email: \textsf{\emailAdi}. \grantsAdi} \and
        Dan Feldman\thanks{\instDanny. Email: \textsf{\emailDanny}. \grantsDanny} \and
        Hayim Shaul\thanks{\instHayim. Email: \textsf{\emailHayim}. \grantsHayim}}

\date{}

\begin{document}
\maketitle

\begin{abstract}
\emph{Secure Search} is the problem of retrieving from a database table (or any unsorted array) the records matching specified attributes, as in SQL SELECT queries, but where the database and the query are encrypted. Secure search has been the leading example for practical applications of Fully Homomorphic Encryption (FHE) starting in Gentry's seminal work; however, to the best of our knowledge all state-of-the-art secure search algorithms to date are realized by a polynomial of degree $\Omega(m)$ for $m$ the number of records, which is typically too slow in practice even for moderate size $m$.

In this work we present the first algorithm for secure search that is realized by a polynomial of degree polynomial in $\log m$. We implemented our algorithm in an open source library based on HELib implementation for the Brakerski-Gentry-Vaikuntanthan's FHE scheme, and ran experiments on Amazon's EC2 cloud. Our experiments show that we can retrieve the first match in a database of millions of entries in less than an hour using a single machine; the time reduced almost linearly with the number of machines.

Our result utilizes a new paradigm of employing coresets and sketches, which are modern data summarization techniques common in computational geometry and machine learning, for efficiency enhancement for homomorphic encryption. As a central tool we design a novel sketch that returns the first positive entry in a (not necessarily sparse) array; this sketch may be of independent interest.

\remove{
We consider the problem of a secure search for a record in a database table, based on a lookup value, as in SQL queries for tables of locations, web-sites, e-mails etc. The goal of \emph{secure search} is for the server (e.g. cloud service) to search in the encrypted database for a record that matches an encrypted lookup value sent by the client, e.g. a smartphone or desktop user, while never decrypting or exposing the data in any way. The search should be non-interactive, except for client sending the lookup value and receiving the matched record (both encrypted).


%
Theoretically, secure search is feasible in polynomial time when using Fully Homomorphic Encryption (FHE), yet it was believed to be too slow to be feasible in practice.

In this work we prove the contrary.
We present the first solution to the secure search problem that is applicable to large datasets and unlimited number of matches.
%
This is by designing a novel search algorithm that uses a polynomial of degree that is only poly-logarithmic in the number $m$ of input records (in contrast to degree at least linear in $m$ in the folklore natural polynomial, which is the state of the art).

To do this, we use a new paradigm that shows that coresets and sketches, which are modern data summarization techniques that are common in computational geometry and machine learning, can be used for homomorphic encryption. In particular, we suggest a new sketch that returns the first positive entry in a (not necessarily sparse) vector. To implement it via a low-degree polynomial we use a small ``coreset of sketches" that is sent to the client, which decodes it to quickly get the desired result.

We implemented our algorithms and a system that uses up to 100 machines on Amazon's EC2 cloud. Our experiment shows that every computer can search millions of entries in less than an hour to answer secure search queries.
Our code is available in an open source library based on HELib~\cite{HELib} implementation of the FHE.
}

%
%
\end{abstract}

\newpage
\setcounter{page}{1}

%
\section{Introduction}\label{sec:intro}
%

Storage and computation are rapidly becoming a commodity with an increasing trend of organizations and individuals (client) to outsource storage and computation to large third-party systems often called ``the cloud'' (server). Usually this requires the client to reveal its private records to the server so that the server would be able to run the computations for the client. With e-mail, medical, financial and other personal information transferring to the cloud, it is paramount to guarantee {\em privacy} on top of data availability while keeping the correctness of the computations.


\paragraph{Fully Homomorphic Encryption (FHE)} \cite{GentryThesis09,GentrySTOC09} is an encryption scheme with the special property of enabling computing on the encrypted data, while simultaneously protecting its secrecy; see a survey in~\cite{HaleviShoup2014helib}.
%
Specifically, FHE allows computing any algorithm on encrypted input (ciphertexts), with no decryption or access to the secret key that would compromise secrecy, yet succeeding in returning the encryption of the desired outcome. Furthermore, the computation is \emph{non-interactive} and with \emph{low communication}: the client only sends (encrypted) input $x$ and a pointer to a function $f$ and receives the (encrypted) output $y=f(x)$, where the computation is done on the server's side, requiring no further interaction with the client.

\paragraph{The main challenge }for designing algorithms that run on data encrypted with FHE is to present their computation as a \emph{low degree polynomial} $f$, so that on inputs $x$ the algorithm's output is $f(x)$ (see examples in \cite{Naehrig11,Graepel12,lauter2014private,hamming15,Cheon2015,LuKS16,CryptoNet16}).
Otherwise, a naive conversion of an algorithm for its FHE version might yield highly impractical result.

\paragraph{Secure search} has been the hallmark example for useful FHE applications and the lead example in Gentry's PhD dissertation \cite{GentryThesis09}.
Here, the goal of the client is to search for an entry in an unsorted array, based on a lookup value.
%
In the secure version, 
the server gets access only to encrypted versions for both the array 
and the lookup value, and returns an encrypted version of the index of the desired entry; 
see formal Definition~\ref{def:search}. 

Use case examples include
secure search for a document matching a retrieval query in a corpus of sensitive documents, such as private emails, classified military documents, or sensitive corporate documents (here each array entry corresponds to a document specified by its list of words or more sophisticated indexing methods, and the lookup value is the retrieval query).
%
%
Another example is a secure SQL search query in a database, e.g., searching for a patient's record in a medical database based on desired attributes (here the array correspond to database columns, and the lookup value specifies the desired attributes).
%


Nevertheless, despite the centrality of the secure search application, little progress has been made in making secure search feasible in practice, except for the 
settings where: (i) either there are only very few records in the database so that the server can return the entire indicator vector indicating for each entry whether or not it is a match for the lookup value~\cite{Cheon2015,hamming15}
or (ii) the lookup value is guaranteed to have a \emph{unique match} in the array~\cite{DorozSunar2014pir,DorozSunar2016pir4search}, or (iii) for an unbounded number of matches the server returns all matches in time is polynomial in the number of matches \cite{AFSreport}.
Clearly, this is insufficient for the 
for use cases of secure search of large datasets and no a-priori unique of few matches guarantee.

\subsection{Our Contribution}

Our contributions in this work are as follow.

\paragraph{First 
solution to the secure search problem} that is applicable to large datasets with unrestricted number of matches. Furthermore, it is non-interactive and with low-communication to the client.
This is by suggesting a search algorithm that can be realized by a polynomial of degree that is only poly-logarithmic in the number $m$ of input records. Prior solutions with low degree polynomials are only for the restricted search settings discussed above; whereas for the unrestricted search problem, the folklore natural polynomial has linear degree in $m$. See Theorem~\ref{lem:FirstPositive-ClientSide} for details.

\paragraph{System and experimental results for secure search.} We implemented our algorithms into a system that runs on up to 100 machines on Amazon's EC2 cloud. Our experiments demonstrating, in support of our analysis, that on a single computer we can answer search queries on database with millions of entries in less than an hour. Furthermore performance scales almost linearly with the number of computers, so that we can answer search queries of a database of billions of entries in less than an hour, using a cluster of roughly 1000 computers; See Section~\ref{sec:xp} and Figure~\ref{fig:spirit_graphs}.

\paragraph{High accuracy formulas for estimating running time} that allow potential users and researchers to estimate the practical efficiency of our system for their own cloud and databases; See Section~\ref{sec:formula} and Figure~\ref{fig:spirit_graphs}.

\paragraph{Open Source Library for Secure-Search with FHE}    based on HELib~\cite{HELib} is provided for the community~\cite{open}, to reproduce our experiments, to extend our results for real-world applications, and for practitioners at industry or academy that wish to use these results for their future papers or products. 

\paragraph{Extensions.}
Our solution extends to retrieving not only exact matches to the lookup value, but also \emph{approximate matches} with respect, say, to a Hamming/edit/Euclidean distance bound, or any generic $\isMatch()$ algorithm; see Section~\ref{sec:generic-search}.
Our solution easily extends to returning not only the index $i^*$ of the first entry in $\cost$ matching the lookup value $\ell$, but also its content $\cost(i^*)$ (similarly, database row $i^*$), e.g. by utilizing techniques from~\cite{DorozSunar2014pir,DorozSunar2016pir4search}.


\subsection{Novel Technique: Search Coreset and SPiRiT Sketch}

\paragraph{Coreset }is a data summarization $C$ of a set $P$ of items (e.g. points, vectors or database records) with respect to a set $Q$ of queries (e.g. models, shapes, classifiers, points, lines) and a loss function $f$, such that $f(P,q)$ is approximately the same as $f(C,q)$ for every query $q\in Q$. The goal is to have provable bounds for (i) the size of $C$ (say, $1/\eps$), (ii) the approximation error (say, $\eps\in(0,1)$) and (iii) construction time of $C$ given $(P,Q,f)$. We can then run (possibly inefficient) existing algorithms and heuristics on the small coreset $C$, to obtain provably approximated solution for the optimal query (with respect to $f$) of the original data.

The coreset is a paradigm in the sense that its exact definition, structure and properties change from paper to paper. Many coresets were recently suggested to solve main problems e.g. in computational geometry (e.g.~\cite{phillips2016coresets,clarkson2010coresets,braverman2011streaming}), machine learning~(e.g.\cite{huggins2016coresets,braverman2017clustering}), numerical algebra~\cite{woodruff2014sketching} and graph theory (e.g.~\cite{czumaj}).

\paragraph{New Search Coreset for Homomorphic Encryption. }In this paper we suggest to solve the secure search problem using coreset. The idea is that instead of requiring that \emph{all} the computation will be done on the server, \emph{most} of the computation will be done on the server. More precisely, the server sends to the client only a small set of encrypted indices, that we call \emph{search coreset}. These coresets are small, 
and communicating them to the client, decrypting them, and decoding the desired result from the coreset is fast and require only little time on the client side. Concretely, in this work the coreset size and decoding time is polynomial in the output size, $\log m$, where $m$ is the input input array size. While it is not clear whether the search problem can be solved efficiently (i.e., via low-degree polynomials) only using the server, we prove in this paper that efficient and secure computation of our suggested search coreset on the server side is possible. Moreover, it reduces the existing state of the art of client computation time from exponential to polynomial in the output size.

Unlike traditional coreset, in this paper the coresets are exact in the sense that the data reduction does not introduce any additional error $\eps$. Moreover, the goal is not to solve an optimization problem, but to suggest algorithm that can be realized by a low-degree polynomial.

\paragraph{Sketches} can be considered as a special type of coresets, where given an $n\times d$ matrix $P$ and a $\log n\times n$ (fat) matrix $S$, the result $C=SP$ is a $\log n\times d$ (``sketch'') vector. Many problems can be efficiently solved on the short sketch $C$ instead of the long matrix $P$, by designing a corresponding (sketch) matrix $S$. See examples and surveys in~\cite{woodruff2014sketching,kane2010exact,clarkson2013low,li2014turnstile}. In this paper we focus on vectors $P$ (setting $d=1$).

\paragraph{SPiRiT Search Sketch} is one of our main technical contribution. It is a search sketch that gets a non-negative vector and returns the index of its first positive entry. It has independent interest beyond secure encryption, since unlike other Group Testing sketches (e.g.~\cite{indyk2010efficiently}) it can be applied on non-sparse input vectors.

To use it, the server first turns the input vector into an (encrypted) indicator binary vector whose $i$th entry is 1 if and only if the $i$th input entry matches the desired lookup value (where a ``match'' is either by an equality condition or a generic condition such as Hamming/edit/Euclidean distance bound; see Algorithms \ref{alg:ToBinaryExact} and Section \ref{sec:generic-search} respectively). Then the search sketch is applied on the indicator vector, and the output is sent to the client.

This $S\circ P\circ i\circ R\circ i\circ T$ sketch is a composition of a \textbf{S}ketch, 
\textbf{P}airwise, \textbf{R}oot, and \textbf{T}ree matrices, together with a binary operator $i(x)=\isPositive(x)$ that returns $1$ for each strictly positive entry in $x$, and $0$ for $x=0$. The design and proof of correctness of this new sketch for non-negative reals (independent of FHE) is described in Section~\ref{sec:spirit-construction}.

\paragraph{SPiRiT for FHE.} We observe that sketches have an important property that makes them very relevant to FHE: they can be implemented via low-degree (linear) polynomial. In fact, the matrices of our SPiRiT are binary and thus can be implemented as sums of subsets, without a single multiplication. Nevertheless, a naive realization would require to apply the polynomial over a large ring $p$, which (mainly due to the $\isPositive$ operator, see Lemma~\ref{lem:ispositive}) requires a polynomial of degree linear in the length $m$ of the input array.

To reduce the degree to poly-logarithmic in input array length $m$, in Sections~\ref{sec:SearchCoresetItem} we introduce a more involved version of SPiRiT that can be applied on small rings, but may not return the correct search result. The server executes this version over $k=(\log m)^{O(1)}$ rings and returns the resulting $k$-coreset to the client. Using techniques such as the Chinese Remainder Theorem and a proper selection of prime ring sizes, we prove that the client is guaranteed to decode the desired index efficiently from this coreset.

%
%

%
\section{Problem Statement and Main Result}
\label{sec:problem-statement}
%

In this section we give a formal statement of the search problem and our main result of efficient algorithm for secure search.

\subsection{The Secure Search Problem}

The goal of this paper is to solve the following search problem on encrypted input array and lookup value,
efficiently under parallel secure computation model formally defined in this section.

Denote $[m]=\br{1,\cdots,m}$; the entries of a vector $\cost$ of length $m\geq1$ by $\cost=(\cost(1),\cdots,\cost(m))$; and every vector is a column vector unless mentioned otherwise.
Denote by $\enc{x}$ the encrypted value of a vector or a matrix $x$ (where encryption of vectors is entry-by-entry and of values represented by $t$ digits, \eg, binary representation, is digit-by-digit).

\begin{definition}[Search Problem\label{def:search}.]
The goal of the \emph{search problem}, given a (not necessarily sorted) $\cost\in \br{0,\cdots,r-1}^m$ and a lookup value $\ell\in\br{0,\cdots,r-1}$, is to output
\[
i^* = \min\sett{i\in[m]}{\cost(i)=\ell}
\]
\end{definition}
Here, and in the rest of the paper we assume that the minimum of an empty set is $0$.

The \emph{Secure-Search} problem is the Search problem when computation is on encrypted data.
That is, the input is ciphertexts $\enc{\cost}$ and $\enc{\ell}$ encrypting $\cost$ and $\ell$ respectively, and the output is a ciphertext $\enc{f(\cost,\ell)}$ encrypting the desired outcome $f(\cost,\ell)$.
Our computation model only assumes that the encryption is by any fully (or leveled) homomorphic encryption (FHE) scheme, e.g.~\cite{BGV12},
%
Such encryption schemes satisfy that given ciphertexts $c_1=\enc{x_1},\ldots,c_n=\enc{x_n}$ one can evaluate polynomials $f(x_1,\ldots,x_n)$ on the plaintext data $x_1,\ldots,x_n$ by manipulating only the ciphertexts $c_1,\ldots,c_n$, and obtaining as the outcome a ciphertext $c = \enc{f(x_1,\ldots,x_n)}$; see \cite{haleviSurvey17} for a survey of FHE, and open implementations of such encryptions e.g. in HELib~\cite{HELib}.
Here and throughout the paper $f()$ is a polynomial over the finite ring $\ZZ_p$ of integers modulo $p$ (see Definitions~\ref{polydef},\ref{ring}), where $p$ is a parameter chosen during encryption, called the \emph{plaintext modulo}; $f(\cost,\ell)$ is the outcome of evaluating $f()$ when assigning values $\cost,\ell$ to its undetermined variables.

In the context of our coreset paradigm, the output $y = f(\cost,\ell)$ is a short sketch, named, \emph{Search Coreset}, so that there is an efficient decoding algorithm to obtain from $y$ the smallest index $i^*$ where $\cost(i^*)=\ell$.

\begin{definition}[Search Coreset] \label{def:sketch-firstpositive}
Let $k,m\ge 1$ be a pair of integers, $\cost\in \REAL^m$ and $\ell\in\REAL$. A vector $y\in \REAL^k$ is a \emph{$k$-search coreset} for $(\cost,\ell)$ if, given only $y$, we can decode (compute) the smallest index $i^*$ of $\cost$ that contains the lookup value $\ell$, or $i^*=0$ if there is no such index.
\end{definition}

The usage scenario is that the server computes $\enc{y} = \enc{f(\cost,\ell)}$ while seeing encrypted values only, whereas the client decrypts and decodes to obtain the desired outcome $i^*$:

\begin{definition}[The Secure-Search Problem]\label{def:secure-search}
Let $k$, $m,r$, $\cost$, $\ell$ and $i^*$ be as in Definition~\ref{def:search}.
The Search problem on $(\cost,\ell)$ is \emph{securely solved} by a non-interactive protocol between a server and a client with shared memory holding ciphertext $\enc{\cost}$ with plaintext modulo $p$
if:
\begin{enumerate}
    \item
        The client sends to the server encrypted lookup value $\enc{\ell}$ and the corresponding ring modulus $p$.
    \item
    \label{enum:call-to-server}
        The server evaluates a polynomial $f(\cost,\ell)$ over $\ZZ_p$ using homomorphic operations to obtain a ciphertext $\enc{y}=\enc{f(\cost,\ell)}$ of a $k$-search-coreset $y$ for $(\cost,\ell)$, and sends $\enc{y}$ to the client.
    \item
        The client decrypts $\enc{y}$ and decodes $y$ to obtain the smallest index  $i^*$ where $\cost(i^*)=\ell$.
\end{enumerate}
The \emph{server time} is $O(d+\log s)$ for $d,s$ the degree and size of the polynomial $f()$; see Definition~\ref{polydef}.
The \emph{client time} is the time to decode $y$. The \emph{overall time} is the sum of client time and server time.

More generally, the server may compute several polynomials $f_1(),f_2(),\ldots$. Moreover, the polynomials may be computed over distinct plaintext moduli $p_1,p_2,\ldots$, provided that the server has ciphertexts $\enc{\cost}_{p_j}$ and $\enc{\ell}_{p_j}$ corresponding for each plaintext moduli $p_j$. The \emph{server time} in this case is $O(d+\log s)$ for $d,s$ the maximum degree and size over all polynomials.
We call a protocol \emph{non-interactive} if the server evaluates all polynomials in a single parallel call.
\end{definition}
We ignored here the time it takes the client to encrypt and decrypt, because it is a property of the underlying encryption scheme and not the search algorithm we provide. To be more precise the client time includes also the time to encrypt $\ell$ and decrypt $y$, which require computing $k=O(\log^2 m)$ encryption/decryptions (each for distinct plaintext moduli).


\paragraph{Security guarantee.}
The server sees only encrypted values $\enc{\cost}$, $\enc{\ell}$ (and any values it computes from them, including the output $\enc{f(\cost,\ell)}$), while having no access to a decryption-key or any secret information. The security guaranty is therefore as provided by the underlying encryption. For the aforementioned schemes, the security achieved (under standard cryptographic assumption) is that of \emph{semantic security}~\cite{goldwasser1984probabilistic}, which is the golden standard in cryptography, saying essentially that seeing and manipulating the ciphertexts reveals no new information on the underlying plaintext data (beyond an upper bound on the data size).

\paragraph{Realization of algorithms by polynomial. }
For clarity of the presentation we usually do not describe the polynomial explicitly. Instead, for each polynomial, we suggest algorithm $\textsc{Alg}_{m,p,..}(x,y,\ldots)$ that can be easily implemented (``realized") as a polynomial. The variables of the polynomial correspond to the input of the algorithm, represented as a concatenation $(x,y,\ldots)$ of the input variables. The evaluation of the polynomial corresponds to the output of the algorithm. Parameters such as $m,p$ above are not part of the input nor the output, but part of the polynomial definition itself.
For example, $p$ may be an integer so that the sum and product operations in evaluating the polynomial are modulo $p$, and $m$ may be the input length.
Algorithm $\textsc{Alg}(\cdots)$ with no subscript parameters is assumed to be run on a RAM-machine (``the client"). It may execute commands that cannot be evaluated via a low-degree polynomial. It is called non-interactive if it makes at most a single parallel call for evaluation of polynomials (on ``the server").

\subsection{Our Main Result}

We aim to securely solve the search problem with overall running time poly-logarithmic in the length $m$ of $\cost$, similarly to the running time of binary search on a sorted (non-encrypted) array, i.e., our question is:
\[
  \begin{split}
& \text{\bf Can we solve the Secure Search problem in time that is poly-logarithmic in the array size? }\\
\end{split}
\]
In this paper we answer this question affirmably (see details and proof in Theorem~\ref{lem:FirstPositive-ClientSide}):

\begin{theorem}[Secure Search]\label{thm:secure-search}
There exists a non-interactive protocol that securely solves the search problem on
$\cost\in\set{0,\ldots,r-1}^m$ and $\ell\in\set{0,\ldots,r-1}$,
%
in overall time polynomial in $\log m$ and $\log r$. 
\end{theorem}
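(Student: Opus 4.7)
The plan is to construct the protocol in three conceptually separate layers, each contributing a polylogarithmic factor to the final complexity, and then to argue that their composition satisfies the server/client split demanded by Definition~\ref{def:secure-search}. First I would reduce the Search problem on $(\cost,\ell)$ to a \emph{FirstPositive} problem on a binary indicator vector $v\in\{0,1\}^m$ defined by $v(i)=1$ iff $\cost(i)=\ell$. The map $(\cost,\ell)\mapsto v$ is entrywise; using Fermat's little theorem over $\mathbb{Z}_p$, equality of two values represented in a fixed base can be realized by a polynomial of degree polynomial in $\log r$ (this is the \textsc{ToBinary} step alluded to in the introduction). Since it is applied entrywise, the size of this polynomial is $O(m)\cdot\poly(\log r)$, but its degree remains $\poly(\log r)$, which is what matters for the FHE evaluation cost via Definition~\ref{def:secure-search}.

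Next I would apply the SPiRiT search sketch to $v$. As described in Section~\ref{sec:spirit-construction}, SPiRiT is a composition of four linear (in fact $0/1$) matrices interleaved with the $\isPositive$ operator, and it produces a short sketch from which the smallest index $i^*$ with $v(i^*)=1$ can be decoded. Over a sufficiently large ring the matrix layers are essentially free (degree $1$, no multiplications), so the only source of multiplicative depth is $\isPositive$. By Lemma~\ref{lem:ispositive}, over $\mathbb{Z}_p$ the $\isPositive$ map has degree $p-1$, and SPiRiT invokes $\isPositive$ only a constant number of times (the $i$'s in $S\circ P\circ i\circ R\circ i\circ T$), so over a ring of size $\poly(\log m)$ the resulting polynomial has degree $\poly(\log m)$.

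The subtle point, and the step I expect to be the hardest, is that a single small ring is not enough: evaluating SPiRiT modulo a small prime $p$ can collapse distinct sums to the same residue and thereby corrupt the output of $\isPositive$, so the sketch may return the wrong index. The fix is the coreset-of-sketches idea from Section~\ref{sec:SearchCoresetItem}: the server runs the small-ring SPiRiT in parallel for $k=\poly(\log m)$ carefully chosen primes $p_1,\ldots,p_k$, producing ciphertexts $\enc{y_1}_{p_1},\ldots,\enc{y_k}_{p_k}$, which together form the $k$-search coreset $y$. The client, after decryption, applies a CRT-style decoding (\textsc{CRTsketch}/\textsc{Decode}) to recover $i^*$ exactly. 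The analysis here must show that the chosen set of primes is rich enough that for every possible support pattern of $v$ at least one consistent index survives the CRT combination, while still having $\sum_j p_j=\poly(\log m)$ so that the per-prime degree is polylogarithmic.

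Finally, I would package the three layers into the protocol of Definition~\ref{def:secure-search}. The client encrypts $\ell$ under each plaintext modulus $p_j$ and sends the ciphertexts; in one parallel call, the server evaluates, for each $j$, the composition $\text{SPiRiT}_{p_j}\circ\text{ToBinary}_{p_j}$ applied to $(\enc{\cost}_{p_j},\enc{\ell}_{p_j})$ and returns the coreset $\{\enc{y_j}\}_{j=1}^k$. The server's maximum polynomial degree is $\poly(\log m,\log r)$ and its size is $\poly(m,\log r)$, so server time $O(d+\log s)=\poly(\log m,\log r)$ by Definition~\ref{def:secure-search}; the client only decrypts $k$ short ciphertexts and runs the CRT decoder, both in time $\poly(\log m,\log r)$. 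Correctness follows by composing the correctness of \textsc{ToBinary}, the correctness of SPiRiT on non-negative inputs from Section~\ref{sec:spirit-construction}, and the CRT decoding argument from Section~\ref{sec:SearchCoresetItem}; security is inherited from the semantic security of the underlying FHE scheme, since the server only manipulates ciphertexts. This yields the stated theorem.
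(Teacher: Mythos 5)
Your proposal follows essentially the same three-layer plan as the paper: reduce to a binary indicator via a $\tobinary$/equality test realized by a low-degree polynomial, apply the SPiRiT sketch $S\circ P\circ i\circ R\circ i\circ T$ whose only nontrivial degree comes from $\isPositive$ of degree $p-1$, and handle small-ring failures by running over $k=\poly(\log m)$ primes and letting the client combine the $k$ candidates. The server/client split, the complexity accounting via Definition~\ref{def:secure-search}, and the security argument are all as in the paper's Theorem~\ref{lem:FirstPositive-ClientSide}.

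The one place where your description diverges from the paper, and is a bit too vague to stand on its own, is the client-side decoding. You describe it as "CRT-style decoding" that reconstructs $i^*$ by combining residues, and you gesture at an analysis showing that a "consistent index survives the CRT combination." The paper does something conceptually simpler and more robust: each prime $p$ yields a \emph{candidate} index $i_p$ (which may be garbage if $p$ is not $A$-correct), the client collects the candidate set $C=\set{i_p}$, then \emph{filters} $C$ by directly checking $\cost(i_p)=\ell$ for each candidate (using the plaintext $\cost$, or by having the server also return $\cost(i_p)$ when $\cost$ is not available to the client), and finally outputs the minimum surviving candidate. Correctness then reduces to an existence statement, Lemma~\ref{lem:existentialCRT}: since the set $A$ of tree-path partial sums has size $\card{A}\le\log m$ and each nonzero $a\le m$ has at most $\log_b m$ prime divisors above $b=\ceil{\log m}$, a pigeonhole argument over $\Primes_{m,\ceil{\log m}}$ guarantees some $p^*$ that divides no nonzero $a\in A$, and for that $p^*$ Lemma~\ref{lem:spirit-sufficient-condition} shows $b_{p^*}$ encodes $i^*-1$ exactly. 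No residue reconstruction is needed. You should make this filtering step explicit: without it, the client has $k$ candidates and no way to tell which one is the right answer, and an actual CRT recombination of the $b_p$'s would not work since the wrong candidates are not residues of $i^*-1$ modulo $p$ but outputs of a corrupted computation.
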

\section{First-Positive via SPiRiT Sketch} \label{sec:spirit}\label{sec:spirit-construction}
%

The first step in our solution to the search problem is to reduce the input $\cost$ to a binary indicator whose $i$th entry is $1$ if and only if $\cost(i)=\ell$. The search problem then reduces to finding the first positive entry in this indicator vector,
as defined below.

\begin{definition}[First positive index]\label{def:firstpositive}
Let $x=(x_1,\ldots,x_m)\in [0,\infty)^m$ be a vector of $m$ non-negative entries. The \emph{first positive index of $x$} is the smallest index $i^*\in[m]$ satisfying that $x_{i^*}> 0$, or $i^*=0$ if $x=(0,\ldots,0)$.
\end{definition}

The main technical result of this paper is an algorithm for computing the first positive index securely on the server (i.e., via low-degree polynomials). For simplicity, in this section we first assume that Algorithm~\ref{alg:FirstPositive-Server} $\sfirstindex_{m,p,r}(\cost)$ runs on a RAM machine, over real numbers (ring of size $p=\infty$ in some sense). This result is of independent interest, with potential applications as explained in sketch literature over reals or group testing, e.g. to handle streaming data. In the next sections, we introduce more tools to show how to realized the same algorithm by a low-degree polynomial over a ring of size $p\geq1$. For this implementation, the matrices of SPiRiT remain essentially the same, but the implementation of $\isPositive_{p,t}(\cdot)$ for $p<\infty$ will be changed; see Algorithm~\ref{alg:ispositive}.

Algorithm~\ref{alg:FirstPositive-Server} provides a construction scheme for computing the first-positive index $i^*\in[m]$ of a non-negative input vector $\cost$ of length $m$. 
The output is the binary representation $b$ of the desired index minus one, $i^*-1$, when $i^*>0$, and it is $b=(0,\ldots,0)$ when $i^*=0$. (We comment that when the output is $b=(0,\ldots,0)$ there is an ambiguity of whether $i^*=1$ or $i^*=0$; this is easily resolved by setting $i^*=1$ if $\cost(1)>0$, and $i^*=0$ otherwise). The algorithm computes the composition of operators: $S\circ P \circ i \circ R \circ i \circ T$. Here, $S, P, R$ and $T$ are matrices (sometimes called sketch matrices), and $i(\cdot)=\isPositive_{\infty,t}(\cdot)$ is an operator that gets a vector $x$ of length $t$ and returns (binary) indicator vector whose $i$th entry is $1$ if and only if $x(i)\neq 0$ (where $t=2m-1$ in its first use, and $t=m$ in the second).

\begin{algorithm}
    \caption{$\sfirstindex_{m,p,r}(\cost)$\label{alg:FirstPositive-Server}}
  {\begin{tabbing}
\textbf{Parameters:} \=Three integers $m,p,r\geq 1$. In Section~\ref{sec:spirit} only, we use $p=\infty$. Later, $p$ is the ring size.\\
\textbf{Input:} \>A vector $\cost\in [0,\infty)^{m}$ whose first positive index is $i^*\geq 1$; See Definition~\ref{def:firstpositive}.\\
  \textbf{Output:} \>Binary representation $b\in\br{0,1}^{\lceil \log_2m\rceil}$ of $i^*-1$.
   \end{tabbing}}
  \vspace{-0.3cm}
  \tcc{See Section \ref{sec:spirit-construction} for the definition of $S,P,R,T$ and $\isPositive_{p,t}$.}
  Set $w \gets T\cdot\cost$ \tcc{$w\in\REAL^{2m-1}$}\label{l11}
  Set $w' \gets \isPositive_{p,2m-1}(w)$ \label{l12}\tcc{$w'\in\br{0,1}^{2m-1}$}
  Set $v \gets R\cdot w'$ \tcc{$v\in\REAL^m$}\label{l13}
  Set $u \gets \isPositive_{p,m}(v)$\label{l14} \tcc{$u\in\br{0,1}^{m}$}
  Set $b \gets (SP)\cdot u$ \label{l15}\tcc{$b=S\circ P\circ i\circ R\circ i\circ T(\cost)$}
  \Return $b$\label{l16}\tcc{$b\in\br{0,1}^{\lceil \log_2m\rceil}$}\label{l:firstpositive-returnb}
\end{algorithm}

\paragraph{Algorithm~\ref{alg:FirstPositive-Server} Overview. }
The input is a vector $\cost$ of length $m$, and the output is the binary representation of $i^*-1$ for $i^*>0$, and of $0$ for $\cost=(0,\ldots,0)$. The parameters $p,r$ will be utilized in later sections, where the algorithm is assumed to be realized by a polynomial (i.e., run on encrypted data on the server): $p\geq1$ will represent the ring of the polynomial (in this section $p=\infty$), and $r\geq1$ will be the maximal value in $\cost$ (here $r=\infty$). In Line~\ref{l11} the input $\cost$ is right multiplied by a $(2m-1)$ by $m$ matrix $T$, called the tree matrix. in Line~\ref{l12}, the resulting vector $w$ is replaced by an indicator vector $w'$ whose $i$th entry is $1$ if and only if $w(i)>0$. In Line~\ref{l13}, $w'$ is left multiplied by an $m$-by-$(2m-1)$ matrix $R$, called Roots matrix. In Line~\ref{l14} we convert the resulting vector $v$ to an indicator value as before. In Line~\ref{l15}, the resulting vector $u$ is left multiplied by the $\lceil\log_2m\rceil$-by-$m$ matrix $SP$. The matrix $SP$ itself is a multiplication of two matrices: a Sketch matrix $S$ and a Pairwise matrix $P$.

In the rest of the section, we define the components of $S,P,i,R,T$ 
for $m\ge 1$ and prove the correctness of Algorithm~\ref{alg:FirstPositive-Server}. Note that all these components are universal constants that can be computed in advance; see Definition~\ref{global}. Without loss of generality, we assume that $m$ is a power of $2$ (otherwise we pad the input $\cost$ by zero entries).

The first matrix $S$ is based on the following definition of a sketch matrix .
\begin{definition}[Sketch matrix.\label{sketch}]
Let $s,k\geq1$ be integers. A binary matrix $S\in\br{0,1}^{k\times m}$ is called an \emph{$(s,m)$-sketch matrix}, if the following holds. There is an algorithm $\Decodesketch$ that, for every vector $y\in\REAL^k$, returns a binary vector $\iz=\Decodesketch(y)\in\br{0,1}^m$ if and only if $y=S\cdot \iz$. 
The vector $y\in\REAL^k$ is called the \emph{$s$-coreset} of the vector $\iz$.
\end{definition}

\paragraph{The Sketch $S\in {\br{0,1}}^{\log m\times m}$}~is a $(1,m)$-sketch matrix as in Definition~\ref{sketch}. Its right multiplication by a binary vector $t=(0,\cdots,0,1,0,\cdots,0)\in \br{0,1}^m$ which has a single ($s=1$) non-zero entry in its $k$th coordinate, yields the binary representation $y=St\in\br{0,1}^{\log m}$ of $k$.
A $(1,m)$-sketch matrix $S$ can be easily implemented by setting each column $k=\set{1,\ldots,m}$ to be the binary representation of $k-1$. More generally and for future work where we wish to search for $s\geq 2$ desired indices, an $(s,m)$-sketch matrix should be used. Efficient construction of an $(s,m)$ sketch matrix for $s\geq2$ is more involved and is explained in e.g.~\cite{indyk2010efficiently}.
\removed{See Section~\ref{futurework}.}

\paragraph{Pairwise matrix $P\in \br{-1,0,1}^{m\times m}$}~is a matrix whose right multiplication by a given vector $u\in\REAL^m$ yields the vector $t=Pu$ of pairwise differences between consecutive entries in $u$, i.e., $t(k)=u(k)-u(k-1)$ for every $k\in\br{1,\cdots,m}$ and $t(1)=u(1)$. Hence, every row of $P$ has the form $(0,\cdots,-1,1,\cdots,0)$. For example, if $m=7$ and $u=(0,0,0,1,1,1,1,1)$ then $t=Pu=(0,0,0,1,0,0,0)$. More generally, if $u$ is a binary vector that represents a step function, then $t$ has a single non-zero entry. Indeed, this is the usage of the Pairwise sketch in $\spirit$.

\medskip\noindent\textbf{The operator} $\isPositive_{\infty,t}(\cdot)$, or $i(\cdot)$ for short, gets as input a vector $v\in\REAL^{t}$, and returns a binary vector $u\in\zo^{t}$ 
where, for every $k\in[t]$, we have $u(k)=0$ if and only if $v(k)=0$. In Section~\ref{sec:ispositive} we define $\isPositive_{p,t}(\cdot)$ for every integer $p\geq1$.
%

To define the next matrix, we define tree representation of a vector $x$, based on the common array representation of a tree as defined in~\cite{cormen2009introduction}. See Fig.~\ref{fig:tree} for a simple intuition of the tedious definitions for the matrix $T$, $R$, and the tree representation of $x$.
\begin{figure}[ht!]
\centering
\includegraphics[width=90mm]{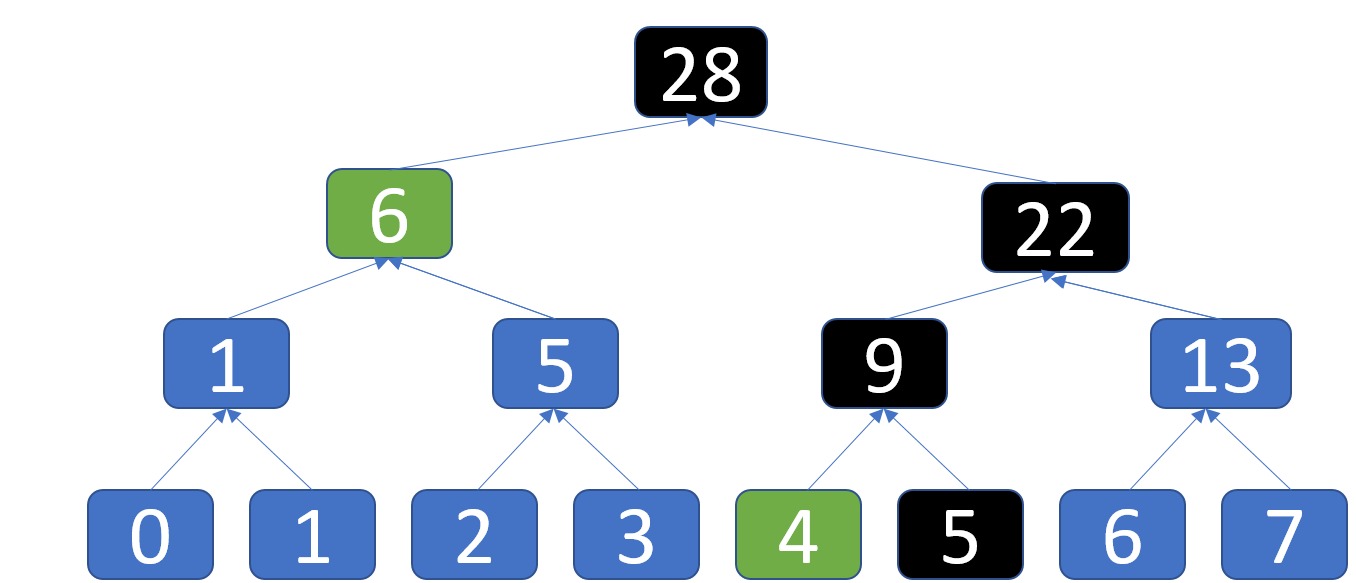}
\caption{
\small The tree representation $\tree(x)$ of the vector $x=(0,1,2,\cdots,7)$ for $m=8$ leaves. Its array representation is the vector $w=w_x=(28,6,22,1,5,9,13,0,1,2,3,4,5,6,7)$, which is the result of scanning the rows of $\tree$ from top to bottom. The Tree matrix $T$ has the property that $w_x=Tx$ for every $x$. The label of each inner node is the sum of its children's labels. The sum of leaves' labels up to the $i=5th$ leaf from the left (labeled '4') is given by $v(5) = x(1)+x(2)+x(3)+x(4)+x(5) = 0+1+2+3+4=10$. More generally, $v=v_x=(0,1,3,6,10,15,21,28)$. The root matrix $R$ has the property that $RTx=Rw=v$. \newline To construct sparse $R$, we prove that every entry of $v$ can be computed using only $O(\log m)$ labels. For example, $2$ labels for its $i=5th$ entry $v(5)$ as follows. First identify all the roots (ancestors) of the $i+1=6th$ leaf: 5,9,22,28 (black in the figure). Among these ancestors, select those who are right children: labels 5 and 22 in the figure. Finally, sum the labels over the left siblings of these selected ancestors: 4 and 6 (in green in the figure) to get the desired sum. Indeed, $v(5) = 4+6=10$.
\label{fig:tree}}
\end{figure}

\begin{definition}[tree/array representation]\label{tree}
Let $x=(x(1),\cdots,x(m))\in\REAL^{m}$ be a vector. The \emph{tree representation} $\tree(x)$ of $x$ is the full binary tree of depth $\log_2 m$, where each of its node is assigned a label as follows. The label of the $i$th leftmost leaf is $x(i)$, for every $i\in[m]$. The label of each inner node of $\tree(x)$ is defined recursively as the sum of the labels of its two children. 

The \emph{array representation} of $x$ is the vector $w=(w(1),\cdots,w(2m-1))\in \REAL^{2m-1}$, where $w(1)$ is the label in the root of $\tree(x)$, and for every $j\in[m-1]$ we define $w(2j)$, $w(2j+1)$ respectively to be the labels of the left and right children of the node whose label is $w(j)$; see Fig.~\ref{fig:tree}.
\end{definition}
In particular, the last $m$ entries of $w$ are the entries of $x=\big(w(m),\cdots,w(2m-1)\big)$.

\paragraph{Roots sketch $R\in \br{0,1}^{m\times (2m-1)}$ }is a binary matrix such that
 \begin{enumerate}
\item Each row of $R$ has $O(\log m)$ non-zero entries.
\item For every tree representation $w\in\REAL^{2m-1}$ of a vector $x\in\REAL^m$ we have that $v=Rw\in \REAL^m$ satisfies for every $j\in[m]$ that $v(j)$ is the sum of entries $x(1),x(2),\ldots,x(j)$ of $x$, i.e.,
$v(j)=\sum_{k=1}^{j} x(k).$
\end{enumerate}
In particular, and for our main applications, if $x$ is non-negative, then $v(j)=0$ if and only if $x(1)=x(2)=\cdots=x(j)=0$. That is, $v(j)$ tells whether $j$ is equal-to or larger-than the index of the first positive entry in $x$. Below is our implementation of the matrix $R$, which also explains its name.

\paragraph{Implementation for the Roots matrix $R$\label{RR}}
Let $x\in[0,\infty)^m$ and $w\in\REAL^{2m-1}$ be its tree representation. We need to design a row-sparse matrix $R$ such that if $v=Rw$ then $v(j)=\sum_{k=1}^{j} x(k)$.  
Our main observation for implementing such a row-sparse matrix $R$ is that $v(j)$ can be computed by summing over only $O(\log m)$ labels in the tree $\tree(x)$; specifically, the sum is over the labels of the set $\lop_{\tree}(j+1)$ of
left-siblings of the ancestors (roots) of the $(j+1)$th leaf; see Fig.~\ref{fig:tree}. By letting $\lop(j+1)$ denote the corresponding indices of this set in the tree representation $w=Tx$ of $x$, we have
\[
v(j)=\sum_{k\in\lop(j+1)} w(k)
\]
This equality holds because the left-siblings in $\lop_{\tree}(j+1)$ partition the leaves $x(1),\ldots,x(j)$ into 
disjoint sets (with a set for each left-sibling), so the sum over each of these sets is the value of the corresponding labels in $w$.

The indices of these left-siblings is formally defined as follows.
\begin{definition}[$\parents,\lop$\label{def:lop}]
Let $x=(x(1),\cdots,x(m))\in\REAL^{m}$, and $\tree=\tree(x)$ be the tree-representation of $x$. Consider the $j$th leftmost leaf in $\tree$ for $j\in [m]$. We define $\parents_{\tree}(j)$ to be the set of ancestors nodes of this leaf. We denote by $\lop_{\tree}(j)$ the union of left-sibling nodes over each ancestor in $\parents_{\tree}(j)$ who is a right-sibling of its parent node. Let $w\in\REAL^{2m-1}$ be the array representation of $\tree$. We denote by $\parents(j),\lop(j)\subseteq [2m-1]$ the set of indices in $w$ that correspond to $\parents_{\tree}(j)$, and $\lop_{\tree}(j)$ respectively.
\end{definition}

From the above discussion we conclude the following implementation of $R$ satisfies the definition for the Roots sketch matrix.
\begin{lemma}[Roots sketch]\label{lem:roots}
Let $R\in\zo^{m\times(2m-1)}$ such that for every $j\in[m]$ and $\ell\in [2m-1]$ its entry in the $j$th row and $\ell$th column is $R(j,\ell) = 1$ if $\ell\in\lop(j+1)$, and $R(j,\ell) = 0$ otherwise;  see Definition~\ref{def:lop} and Fig.~\ref{tree}. Then $R$ is a Roots sketch matrix as defined above.
\end{lemma}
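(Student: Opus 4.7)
The plan is to verify the two properties in the definition of a Roots sketch matrix given at the end of Section~\ref{RR}. Property (1), row sparsity, is almost immediate: the tree $\tree(x)$ has depth $\log_2 m$, so $|\parents_{\tree}(j+1)|\le \log_2 m$, and since $\lop_{\tree}(j+1)$ contributes at most one left-sibling per ancestor, we get $|\lop(j+1)|\le \log_2 m$. By the definition of $R$ in the lemma statement, the $j$th row has exactly $|\lop(j+1)|$ non-zero entries, hence $O(\log m)$ as required.

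The heart of the proof is Property (2), for which I would establish the following combinatorial claim: the leaf-sets of the subtrees of $\tree(x)$ rooted at the nodes in $\lop_{\tree}(j+1)$ form a partition of the first $j$ leaves $\{x(1),\ldots,x(j)\}$. To prove this, I would look at the unique root-to-leaf path to the $(j+1)$th leaf. At each internal node on this path, the path either turns left or turns right. Whenever it turns right, the node is a right-child and hence the sibling on the left is added to $\lop_{\tree}(j+1)$; its entire subtree of leaves lies strictly to the left of the $(j+1)$th leaf, and therefore consists of leaves from $\{x(1),\ldots,x(j)\}$. Conversely, every leaf $x(k)$ with $k\le j$ belongs to exactly one such left-sibling subtree: consider the deepest internal node where the root-to-$x(k)$ path diverges from the root-to-$x(j+1)$ path; since $k<j+1$, the $x(k)$ path goes left while the $x(j+1)$ path goes right at that node, so $x(k)$ sits in the left-sibling subtree appended to $\lop_{\tree}(j+1)$ at that node. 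A clean formal write-up is induction on $\log_2 m$: in the case $j+1\le m/2$ both the path and all of $x(1),\ldots,x(j)$ stay in the left subtree and the claim reduces to the same statement on half the size; in the case $j+1>m/2$ the path turns right at the root, so the left child is added to $\lop_{\tree}(j+1)$ (covering $x(1),\ldots,x(m/2)$), and the remaining leaves $x(m/2+1),\ldots,x(j)$ are handled by induction inside the right subtree.

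Given the partition claim, property (2) follows by invoking the defining property of the array representation. By Definition~\ref{tree} and a trivial induction on depth, for every internal node its label in $\tree(x)$ equals the sum of the leaf labels of its subtree; translating to the array representation, $w(\ell)$ equals the sum of $x(k)$ over the leaves $x(k)$ lying in the subtree rooted at the node indexed by $\ell$. Combining this with the partition claim,
\[
(Rw)(j)=\sum_{\ell\in\lop(j+1)} w(\ell)=\sum_{\ell\in\lop(j+1)}\ \sum_{x(k)\text{ in subtree of }\ell} x(k)=\sum_{k=1}^{j} x(k),
\]
which is exactly $v(j)$ as required.

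The main obstacle is the bookkeeping in the partition claim; all remaining steps are essentially definitional. The induction outlined above is the cleanest way to discharge this, since it keeps both the structure of the tree representation and the definition of $\lop_\tree$ aligned with a single recursive step.
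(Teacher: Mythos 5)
Your proof is correct and matches the paper's approach: row sparsity follows from bounding the number of ancestors, and the sum identity follows from the partition claim applied to the tree representation. The only difference is one of detail rather than substance — the paper (both in the main text around the definition of $R$ and in Claim~\ref{claim:analysis-root-reals} in the appendix) asserts the partition of $\{x(1),\ldots,x(j)\}$ by the left-sibling subtrees without spelling out an argument, whereas you actually prove it via the root-to-leaf path / lowest-common-ancestor observation and formalize it cleanly by induction on the tree depth.
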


\paragraph{The Tree matrix $T\in \br{0,1}^{(2m-1)\times m}$ }is a binary matrix that, after right multiplication by a vector $x\in\REAL^{m}$, returns its tree representation $w\in\REAL^{2m-1}$; see Definition~\ref{tree}.
The value $w(i)$ is a linear combination of entries in $x$, specifically, the sum of the labels in the leaves of the sub-tree that is rooted in the inner node corresponding to $w(i)$. Hence, such a matrix $T$ that satisfies $w=Tx$ can be constructed by letting each row of $T$ corresponds to a node $u$ in $\tree(x)$, and has $1$ in every column $j$ so that $u$ is the the ancestor of the $j$th leaf. Note that this unique matrix $T$ can be constructed obliviously and is independent of $x$.

%

\begin{theorem}[First-positive over non-negative reals\label{lem1}]\label{lem:firstpositive-reals}
Let $r,m\geq1$ be integers, and $\cost\in[0,\infty)^m$ be a non-negative vector.
Let $y$ be the output of a call to $\sfirstindex_{m,\infty,r}(\cost)$; see Algorithm~\ref{alg:FirstPositive-Server}.
Then $y$ is the binary representation of the first positive index of $\cost$. See Definitions~\ref{def:sketch-firstpositive} and~\ref{def:firstpositive}.
\end{theorem}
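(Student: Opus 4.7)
The plan is to track what each of the five lines of Algorithm~\ref{alg:FirstPositive-Server} produces, showing that the composition $S\circ P\circ i\circ R\circ i\circ T$ converts the non-negative input $\cost$ into the binary representation of $i^*-1$. Because the only non-linear operators are the two applications of $\isPositive_{\infty,\cdot}$, and the inputs at those points are non-negative (being sums of entries of a non-negative vector, followed by a binary vector), the argument is really about what the sign pattern of the intermediate vectors looks like.

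First I would verify, by the construction of $T$, that $w=T\cost$ is the array representation of the tree $\tree(\cost)$, so the entry $w(k)$ equals $\sum_{j\in L_k}\cost(j)$ where $L_k\subseteq[m]$ is the set of leaves in the subtree rooted at the node indexed by $k$. Since $\cost\geq 0$, this immediately gives $w(k)>0$ iff some leaf of that subtree is positive, hence $w'(k)=\isPositive_{\infty,2m-1}(w)(k)$ is the indicator of ``the subtree rooted at $k$ contains a positive entry of $\cost$.'' Next I would prove the key combinatorial fact: for every $j\in[m]$, the sets $\{L_k\}_{k\in\lop(j+1)}$ form a partition of $\{1,\ldots,j\}$. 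This follows by induction on the depth of the $(j+1)$th leaf: each time the path from the root to leaf $j+1$ turns right at a node, the left sibling of that node is a maximal subtree of ancestors' left descendants that has not yet been accounted for, and the union of these left-sibling subtrees is exactly the set of leaves strictly to the left of leaf $j+1$.

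From this partition, $v(j)=\sum_{k\in\lop(j+1)} w'(k)$ counts how many of the left-sibling subtrees of leaf $j+1$ contain at least one positive entry of $\cost$; in particular $v(j)>0$ iff at least one of $\cost(1),\ldots,\cost(j)$ is positive, i.e., iff $j\geq i^*$. Hence $u(j)=\isPositive_{\infty,m}(v)(j)$ is the step function $u(j)=\mathbf{1}[j\geq i^*]$ (and the all-zero vector when $i^*=0$). Applying the Pairwise matrix $P$ then yields $Pu=e_{i^*}$, the standard basis vector with a single $1$ in coordinate $i^*$ (or the zero vector when $i^*=0$), since $P$ computes consecutive differences of the entries of a step function. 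Finally, because $S$ is a $(1,m)$-sketch matrix whose $k$th column is the binary representation of $k-1$, we get $b=SPu=S e_{i^*}$ equal to the binary representation of $i^*-1$, matching the required output (and equal to $(0,\ldots,0)$ when $i^*=0$, consistent with the stated ambiguity).

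The main obstacle is the partition claim for $\lop(j+1)$; everything else is then a short bookkeeping calculation. I would likely prove it by a small induction that simultaneously describes $\parents(j+1)$ and $\lop(j+1)$ in terms of the binary expansion of $j$ (each bit of the expansion records whether the corresponding ancestor on the root-to-leaf path is a right child and therefore contributes its left sibling to $\lop(j+1)$), which makes the partition and the identity $\sum_{k\in\lop(j+1)}|L_k|=j$ transparent.
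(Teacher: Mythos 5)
Your proposal is correct and follows essentially the same route as the paper's proof: track the effect of each factor in the composition $S\circ P\circ i\circ R\circ i\circ T$, with the decisive step being that the left-sibling subtrees indexed by $\lop(j+1)$ partition the leaves $\{1,\ldots,j\}$, from which the step-function structure of $u$ and the rest of the bookkeeping follow. The paper states the partition fact directly rather than deriving it from the binary expansion of $j$ as you suggest, but that is a presentational detail, not a different argument.
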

\begin{proof}
  Proof appears in Appendix~\ref{sec:analysis-first-prositive-reals}.
\end{proof}

%
\section{Secure Search}
%

In this section we use the result of the previous section to compute a $k$-search coreset for a given vector $\cost\in\br{0,\cdots,r-1}^m$ via polynomials of low-degree.

\subsection{Secure SPiRiT}\label{sec:ispositive}

To run SPiRiT over low-degree polynomials we implement $\isPositive_{p,t}$ from the previous section via a polynomial over a ring of size $p<\infty$; see Appendix~\ref{ispossec}.
%

Using this implementation of $\isPositive_{p,t}(x)$, we can realize SPiRiT in Algorithm~\ref{alg:FirstPositive-Server} via a polynomial of degree $(p-1)^2$ \anote{I changed $p$ to $(p-1)^2$} over a ring $\ZZ_p$. Since we aim for degree $(\log m)^{O(1)}\ll m$, we take $p=(\log m)^{O(1)}$. However, for such small $p$, the correctness of SPiRiT no longer holds (\eg, since summing $m$ positive integers modulo $p$ may result in $0$ over such a small modulus $p$). Fortunately, we can still prove correctness under the condition specified in Lemma~\ref{lem:spirit-sufficient-condition} below. In the next sections we design algorithms that ensure this condition is satisfied.

\begin{definition}[$A$-correct]\label{def:T-A-correct}
Let $A$ be a set of integers. An integer $p\geq1$ is \emph{$A$-correct} if for every $a\in A$ we have $a=0$ if-and-only-if $(a \mod p)=0$.
\end{definition}

%

\begin{lemma}[Sufficient condition for success of $\sfirstindex_{m,p,r}$.]
\label{lem:spirit-server}
\label{lem:spirit-sufficient-condition}
    Let $m,p,r\geq 1$ be integers, $\cost\in\set{0,\cdots,r-1}^m$ a vector whose first positive index is $i^*\in[m]\cup\set{0}$, and
    $
    A=\sett{(T\cdot\cost)(j)}{j\in\parents(i^*)}
    $
    for $\parents(\cdot)$ and the Tree matrix $T$ as defined in Section~\ref{sec:spirit}.
    If $p$ is $A$-correct, 
    then the output $b$ returned from $\sfirstindex_{m,p,r}(\cost)$ (see Algorithm~\ref{alg:FirstPositive-Server})
    is the binary representation of $i^*-1$ (and $b=0^{\log m}$ if $i^*=0$).
    Moreover, $\sfirstindex_{m,p,r}(\cost)$ can be realized by a non-interactive parallel call to $\log m$ polynomials (one for each output bit), each of log-size $O(\log m)$ and degree $O(p^2)$.
\end{lemma}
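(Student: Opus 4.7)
The plan is to split the proof into two parts: correctness of the output $b$ under the $A$-correctness hypothesis, and complexity of the polynomial realization. Correctness rests on carefully comparing the mod-$p$ execution of Algorithm~\ref{alg:FirstPositive-Server} with its real-valued analysis in Theorem~\ref{lem:firstpositive-reals}, isolating the places where modular reduction could change the answer and showing that $A$-correctness rules them out.

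For correctness I would trace the algorithm line by line. Lines~\ref{l11}, \ref{l13}, and \ref{l15} are linear in sparse $\{-1,0,1\}$-valued matrices; aside from the tree labels $w=T\cost$, every intermediate quantity is either a coordinate of the binary vector $w'$ or a short sum of such bits (of length $|\lop(j+1)|=O(\log m)$ for $v=Rw'$, and $O(\log m)$ for $SPu$), so for any $p$ larger than this small bound the reduction mod~$p$ agrees with the integer value coordinate by coordinate. The only step where the mod-$p$ computation can diverge from the real-valued one is therefore Line~\ref{l12}, where $\isPositive_{p,2m-1}(w)$ may return $0$ on a coordinate $w(k)\neq 0$ if $p\mid w(k)$. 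The key combinatorial claim is that a wrong answer on coordinate $k$ is harmful only when $k\in\parents(i^*)$: for $j<i^*$ every $k\in\lop(j+1)$ roots a subtree entirely to the left of leaf $i^*$, so $w(k)=0$ over $\ZZ$ and $w'(k)=0$ unconditionally, giving $v(j)=0$; and for $j\geq i^*$ the unique $k\in\lop(j+1)$ whose subtree contains leaf $i^*$ is itself an ancestor of $i^*$, so $w(k)\in A$ and $A$-correctness forces $w'(k)=1$, giving $v(j)\geq 1$. Thus the second $\isPositive$ produces the step vector $u$ with $u(j)=0$ for $j<i^*$ and $u(j)=1$ for $j\geq i^*$, at which point Theorem~\ref{lem:firstpositive-reals}'s analysis of $SP$ takes over: $Pu$ has a single $1$ in coordinate $i^*$ and $SPu$ is the binary representation of $i^*-1$, or zero if $i^*=0$.

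For the complexity claim I rely on the implementation of $\isPositive_{p,t}$ by a degree-$(p-1)$ polynomial via Fermat's little theorem, developed in Appendix~\ref{ispossec}. SPiRiT composes $\isPositive$ twice, sandwiched between the fixed sparse matrices $T,R,P,S$ whose entries are constants and therefore do not multiply polynomial degree. Consequently each of the $\log m$ output bits is realized by a polynomial of degree $(p-1)^2=O(p^2)$ in the coordinates of $\cost$, and a direct count of sparse matrix-vector products and coordinate-wise $\isPositive$ evaluations bounds the arithmetic-circuit size by $\poly(m,p)$, giving log-size $O(\log m)$. Finally, since the $\log m$ output bits are independent polynomials in the same encrypted inputs they can be evaluated in a single parallel call, yielding the non-interactive realization. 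The main obstacle is the combinatorial reduction in the middle step: recognizing that only the $O(\log m)$ tree labels along the root-to-$i^*$ path actually need $\isPositive$ to behave correctly, and that these labels are exactly the ones indexed by $\parents(i^*)$; without this observation one would be forced to demand $A$-correctness on the entire tree, defeating the purpose of a small-$p$ implementation.
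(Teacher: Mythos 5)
Your proof is correct and follows essentially the same route as the paper's: split on $j<i^*$ versus $j\geq i^*$, show the first case gives $u(j)=0$ unconditionally by non-negativity, show the second gives $u(j)=1$ by locating an element of $\parents(i^*)\cap\lop(j+1)$ and invoking $A$-correctness together with the bound $|\lop(j+1)|\leq\log m<p$ to prevent wrap-around, and then hand the resulting step vector $u$ to the $S,P$ analysis. Your framing (identify Line~\ref{l12} as the sole point of possible mod-$p$ divergence and argue it only matters on coordinates in $\parents(i^*)$) is a slightly tidier top-down presentation of the same argument the paper carries out via its three claims, and your complexity accounting by composing the two degree-$(p-1)$ $\isPositive$ polynomials with the linear sketch matrices matches the paper's.
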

\begin{proof}
    Fix $i^*\in[m]$ (the case $i^*=0$ is trivial, details omitted). We first show that if $p$ is $A$-correct then $\sfirstindex_{m,p,r}$ returns the binary representation $b^*$ of $i^*-1$.
    Denote $x=\cost$; and denote by $i_p(),i_\REAL()$ the $\isPositive$ operators used in $\sfirstindex_{m,p,r}$ and $\sfirstindex_{m,\infty,r}$ respectively. Namely, these operators, given an integer vector map its entries to binary values, where $i_p()$ maps to $0$ all multiples of $p$, and $i_\REAL$ maps to $0$ only on the real number zero.
    Let
    \[
    u(j) = i_p(R(i_p(Tx\mod p))\mod p)(j) = i_p\left(\sum_{k\in\lop(j+1)}(i_p(Tx\mod p))\right)(j)
    \]
    (where the last equality is by construction of $R$).
    We show below that for all $j\in[m]$, the following holds:
    \begin{itemize}
        \item if $j<i^*$, then $u(j)=0$ (see Claim~\ref{claim:suff-cond-low-i}); and
        \item if $j\ge i^*$ and $p$ is $A$-correct for $A=\parents(i^*)$, then $u(j)=1$ (see Claim~\ref{claim:suff-cond-high-i}).
    \end{itemize}
    This implies (by Claim~\ref{claim:SP-p}) that a call to $\sfirstindex_{m,p,r}(x)$ returns the binary representation $b^*$ of $i^*-1$.

    We next analyze the complexity of the polynomial $f()$ realizing $\spirit_{p,m,r}$. $f()$ is the composition of polynomials realizing the matrices $(SP)$, $R$ and $T$, and polynomials realizing the two evaluations of the operator $i_p()=\isPositive_{p,t}()$,
    with degree and size $1$ and $O(m^2)$ for the former (where size is simply the number of matrix entries), and $p-1$ and $1$ for the latter.
    The degree of $f$ is therefore $1\cdot(p-1)\cdot 1\cdot(p-1)\cdot 1 = (p-1)^2$, and its size $O(m^2\cdot 1\cdot m^2\cdot 1\cdot m^2)=O(m^6)$ implying log-size of $O(\log m)$. (We remark that the size bound is not tight; more accurately, we can count only the non-zero entries in the matrices.)

\begin{claim}
\label{claim:suff-cond-low-i}
    $u(j) = 0$ for all $j<i^*$.
\end{claim}
\begin{proof}
  Fix $j<i^*$. We first show that $Tx(k) = 0$ for all $k\in\lop(j+1)$. For this purpose observe that
  \[
  Tx(k) = \sum_{i \text{ s.t. }k\in\parents(i)}x(i) \le \sum_{i=1}^j x(i),
  \]
  where the equality is by definition of the tree matrix $T$, and the inequality follows from $k\in\lop(j+1)$ being an ancestor only of (a subset of) the first $j$ leaves, and $x$ being non-negative. The above implies that $Tx(k)=0$ because $x(1)=\ldots=x(j)=0$ for $j$ smaller than the first positive index $i^*$.
  We conclude therefore that
  $i_p((Tx\mod p)(k)) = 0$,
  and the sum of these values 
  over all $k\in\lop(j+1)$ is also zero:
  \[
  (R(i_p(Tx\mod p))\mod p)(j) = \sum_{k\in\lop(j+1)}(i_p(Tx \mod p))(k) = 0.
  \]
  implying that
  $u(j) = i_p(R(i_p(Tx\mod p))\mod p)(j) = 0.$
\end{proof}

\begin{claim}
\label{claim:suff-cond-high-i}
    Suppose $p$ is $A$-correct for the set $A = \parents(i^*)$. Then $u(j) = 1$ for all $j\ge i^*$.
\end{claim}
\begin{proof}
  Fix $j\ge i^*$. The key observation is that the intersection $\parents(i^*)$ and $\lop(j+1)$ is non-empty:
  \[
  \exists k^* \in \parents(i^*)\bigcap \lop(j+1).
  \]
  The above holds because the left and right children $v_L,v_R$ of the deepest common ancestor of $i^*$ and $j+1$ must be the parents of $i^*$ and $j+1$ respectively (because $i^* < j+1$), implying that $v_L$ is both an ancestor of $i^*$ and a left-sibling of the ancestor $v_R$ of $j+1$. Namely, $v_L$ is in the intersection of $\parents(i^*)$ and $\lop(j+1)$.
  Now, since $k^* \in\parents(i^*)=A$ then,
  \[
  Tx(k^*) = \sum_{j\text{ s.t. }k^*\in\parents(j)}x(j) \ge x(i^*) \ge 1,
  \]
  %
  implying by $A$-correctness of $p$ that,
  \[
  (i_p(Tx\mod p))(k^*) = 1.
  \]
  Therefore, since $k^*\in\lop(j+1)$, we have the lower bound:
  \[
  (R(i_p(Tx\mod p)))(j) = \sum_{k\in\lop(j+1)}(i_p(Tx\mod p))(k) \ge (i_p(Tx\mod p))(k^*) = 1.
  \]

  Conversely, since the above is a sum over at most $\card{\lop(j+1)} \le \log m < p$ bits (where the first inequality is a bound on the depth of a binary tree with $m$ leaves, and the second inequality follows from the definition of $\Primes_{m,s}$; see Definition~\ref{def:primes-m-s}), we have the upper bound:
  \[
  (R(i_p(Tx\mod p)))(j) \le \log m < p.
  \]

  We conclude therefore that the above is non-zero even when reduced modulo $p$:
  \[
  u(j) = i_p(R(i_p(Tx\mod p))\mod p)(j) = 1.
  \]
\end{proof}

\begin{claim}\label{claim:SP-p}
  Let $u=(0,\ldots,0,1\ldots,1)$ a length $m$ binary vector accepting values $u(1)=\ldots=u(i^*-1)=0$ and values $u(i^*)=\ldots=u(m)=1$. Then $(SP\cdot u \mod p)$ is the binary representation of $i^*-1$.
\end{claim}
\begin{proof}
  Multiplying $u$ by the pairwise difference matrix $P\in\set{-1,0,1}^{m\times m}$ returns the binary vector $t = Pu\mod p$ in $\zo^m$ defined by $t(k) = u(k)-u(k-1)$ (for $u(0)=0$). This vector accepts value $t(i^*)=1$ and values $t(i)=0$ elsewhere.
  Multiplying $t$ by the sketch $S\in\zo^{\log m\times m}$ returns the binary vector $y=St\mod p$ in $\zo^{\log m}$ specifying the binary representation of $i^*-1$.
  We conclude that $(SP\cdot u\mod p)$ is the binary representation of $i^*-1$.
\end{proof}
\end{proof}

%
\subsection{Search Coreset's Item}\label{sec:SearchCoresetItem}
%

As explained in Lemma~\ref{lem:spirit-server}, running SPiRiT on the server would yield a result that may not be the desired first positive index of the input vector, but will serve as an item in a $k$-coreset for search. Moreover, in the Search problem we are interested in the entry that consists of a given lookup value $\ell$, and not on the first positive index. In this section we handle the latter issue.

\begin{algorithm}
    \caption{$\oneWitness_{m,p,r}(\cost, \ell)$\label{alg:oneWitness}}
  {\begin{tabbing}
\textbf{Parameters:} \=Three integers $m,p,r\geq 1$.\\
\textbf{Input:} \>A vector $\cost\in\set{0,\ldots,r-1}^{m}$ and a lookup value $\ell\in\set{0,\ldots,r-1}$ to search for.\\
  \textbf{Output:} \>An item of a search coreset that is constructed in Algorithm~\ref{alg:FirstPositive-ClientSide1}; see Lemma~\ref{lem:searchserver}.
   \end{tabbing}}
  \vspace{-0.3cm}
\tcc{$\forall i: \ind(i)=1$ if $\cost(i)=\ell$, and $\ind(i)=0$ otherwise}
      $\ind\gets \tobinary_{m,p,r}(\cost,\ell)$\label{l:search-coreset-tobinary}
      \tcc{see Algorithm~\ref{alg:ToBinaryExact}
}
  Set $b_p \gets \sfirstindex_{m,p,r}(\ind)$
  \tcc{See Algorithm~\ref{alg:FirstPositive-Server}}
  \label{l:search-coreset-spirit}
  \Return $b_p$
\end{algorithm}

\paragraph{Overview of Algorithm~\ref{alg:oneWitness}. }
For our main application the algorithm is to be run on encrypted data on the server side (i.e., realized by a polynomial). The fixed parameters 
are the integers $m,p,r\geq 1$, where $\ZZ_p$ is the ring in which all operations are computed, $m$ is the size of the expected input array, and $\set{0,\ldots,r-1}$ is the set of possible values for each entry. The input is $\cost$ of $m$ integers in $\set{0,\cdots,r-1}$ with an additional integer $\ell\in\set{0,\ldots,r-1}$ to look for in this array. The output of $\oneWitness_{m,p,r}(\cost, \ell)$ is a vector $b_p$ of $\log m$ bits that, under the condition on $p$ from Lemma~\ref{lem:spirit-server}, represents the smallest value $i^*-1$, where $\cost(i^*)=\ell$. For example, if $\log m=3$, $p=10$, $\cost=(4,2,3,9,5,4,9,2)$ and $\ell=9$ then $i^*-1=3$ since $\cost(4)=9$, and the output is $b_p=(0,1,1)$ .

In Line~\ref{l:search-coreset-tobinary} 
we construct a length $m$ binary array $\ind$ whose value is $1$ only on the entries where $\cost$ contains the lookup value $\ell$; this is done via Algorithm~\ref{alg:ToBinaryExact} on the appendix.
%
In Line~\ref{l:search-coreset-spirit} 
we return the first such entry, using Algorithm~\ref{alg:FirstPositive-Server}, which in turn calls the $\spirit$ sketch.

%

\begin{lemma}[Sufficient condition for success of $\oneWitness_{m,p,r}$]\label{lem:searchserver}
    Let $m,p,r\geq 1$ be integers, $\cost\in\set{0,\cdots,r-1}^m$ a vector, $\ell\in\set{0,\ldots,r-1}$ a lookup value,
    and $i^*\in[m]$ the smallest index where $\cost(i^*)=\ell$.
    Let
    \[
    A=\sett{(T\cdot\ind)(j)}{j\in\parents(i^*)}
    \]
    where $\ind$ is the vector computed in Line~\ref{l:search-coreset-tobinary} of Algorithms~\ref{alg:oneWitness}, and $\parents(\cdot)$ and the Tree matrix $T$ are as defined in Section~\ref{sec:spirit}.
    If $p$ is $A$-correct, 
    then the output $b_p$ returned from $\oneWitness_{m,p,r}(\cost, \ell)$ (see Algorithm~\ref{alg:oneWitness})
    is the binary representation of $i^*-1$ (and $b_p=0^{\log m}$ if $\ell$ does not appear in $\cost$).
    Moreover, $\oneWitness_{m,p,r}(\cost, \ell)$ can be realized by a non-interactive parallel call to $\log m$ polynomials (one for each output bit), each of log-size $O(\log m+\log r)$ and degree $O(p^2\log r)$.
\end{lemma}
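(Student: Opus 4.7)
The plan is to prove Lemma~\ref{lem:searchserver} by decomposing the algorithm into its two stages and applying (i) the correctness specification of $\tobinary_{m,p,r}$ (Algorithm~\ref{alg:ToBinaryExact}) and (ii) the sufficient condition for $\sfirstindex_{m,p,r}$ already established in Lemma~\ref{lem:spirit-sufficient-condition}. The composition is direct because the indicator vector produced by $\tobinary$ is exactly the non-negative input format that $\sfirstindex$ expects, and the definition of $A$ in this lemma was specifically chosen to match the hypothesis of Lemma~\ref{lem:spirit-sufficient-condition} when applied to $\ind$ instead of $\cost$.

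For the correctness half, I would first invoke the specification of $\tobinary_{m,p,r}(\cost,\ell)$ to conclude that after Line~\ref{l:search-coreset-tobinary} the vector $\ind\in\zo^m$ satisfies $\ind(i)=1$ iff $\cost(i)=\ell$ and $\ind(i)=0$ otherwise. Consequently, the smallest index $i^*\in[m]$ with $\cost(i^*)=\ell$ coincides with the first positive index of $\ind$ in the sense of Definition~\ref{def:firstpositive} (and equals $0$ when no such index exists). Now observe that the set $A$ in the current lemma is exactly the set from Lemma~\ref{lem:spirit-sufficient-condition} applied to $\ind$, namely $A=\sett{(T\cdot\ind)(j)}{j\in\parents(i^*)}$. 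Hence, under the hypothesis that $p$ is $A$-correct, Lemma~\ref{lem:spirit-sufficient-condition} guarantees that the call $\sfirstindex_{m,p,r}(\ind)$ in Line~\ref{l:search-coreset-spirit} returns the binary representation of $i^*-1$ (or $0^{\log m}$ if $i^*=0$, i.e., if $\ell$ does not appear in $\cost$).

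For the complexity half, I would argue by composition of polynomial realizations. From Lemma~\ref{lem:spirit-sufficient-condition}, $\sfirstindex_{m,p,r}$ is realized by $\log m$ parallel polynomials, each of degree $O(p^2)$ and log-size $O(\log m)$, whose variables are the entries of $\ind$. From the specification of $\tobinary_{m,p,r}$ each entry $\ind(i)$ is realized by a polynomial in the bit-representations of $\cost(i)$ and $\ell$: comparing the $\log r$ bits of these two values entry-by-entry (via a squared-difference bit-equality of degree $2$) and taking the product contributes an additional degree factor of $O(\log r)$ and a size contribution of $O(\log r)$ per entry. Composing the two stages multiplies the degrees to give total degree $O(p^2)\cdot O(\log r)=O(p^2\log r)$, while the log-size is additive, giving $O(\log m)+O(\log r)=O(\log m+\log r)$ per output bit. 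Since the two stages are executed in sequence with no client interaction in between, the whole computation is a single non-interactive parallel call to $\log m$ polynomials with the claimed parameters.

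The main obstacle is largely bookkeeping: one must verify carefully that the definition of $A$ here is indeed the one required by Lemma~\ref{lem:spirit-sufficient-condition} when that lemma is instantiated with the indicator vector $\ind$ rather than a generic non-negative input, and that the first-positive index of $\ind$ really coincides with $i^*$ even in the degenerate case $i^*=0$; both are immediate from the definitions but need to be stated explicitly. The complexity bound likewise requires only a careful multiplication/addition of the degree and size bounds of the two constituent polynomials, without any new combinatorial argument.
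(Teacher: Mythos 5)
Your proposal is correct and follows essentially the same route as the paper: the paper's own proof is a one-liner citing Lemma~\ref{lem:tobinary} (for the correctness and complexity of $\tobinary$) and Lemma~\ref{lem:spirit-sufficient-condition} (for $\sfirstindex$) and leaving the composition implicit. You simply spell out that composition — matching the first positive index of $\ind$ to $i^*$, verifying the set $A$ is the right one, and multiplying degrees while adding log-sizes — which is exactly the bookkeeping the paper elides.
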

\begin{proof}
  The proof follows immediately from Lemmas~\ref{lem:spirit-sufficient-condition} and \ref{lem:tobinary}, analyzing the correctness and complexity of $\sfirstindex$ and $\tobinary$ respectively.\anote{elaborate if there's time}
\end{proof}

Note that if $p \geq m + 1$, then $p$ is $A$-correct and $b_p$ is the desired output: the binary representation of $i^*-1$, where $\cost(i^*)=\ell$ is the first occurrence of $\ell$ in $\cost$. Since in this case every item $a\in A$ is upper bounded by the length $m$ of $\ind\in\zo^m$, and thus $a \bmod p = a$. However, such a ring size $p$ would result in a polynomial of a degree that is linear in $m$, whereas we desire for poly-logarithmic dependency. This is why we use $k$-coreset as explained in the next section.

\subsection{Main Algorithm for Secure Search}\label{sec:FirstPositive-Client}

In this section we present our main search algorithm that, given a vector $\cost\in\set{0,\ldots,r-1}^m$, and a lookup value $\ell$, returns the first index of $\cost$ that contains $\ell$. This is done by constructing a $k$-coreset where $k=O(\log^2 m)$. We prove that the algorithm can be implemented in efficient (poly-logarithmic in $m$) overall time. The coreset consists of the outputs of the same polynomial (secure algorithm that run on the server) over different small prime rings sizes. The resulting coreset is computed via a non-interactive parallel call from the client (RAM machine) who combines them to conclude the desired index $i^*$ that contains $\ell$. The set of primes is denoted using the following definition.

\begin{definition}[$\Primes_{m,s}$]\label{def:primes-m-s}
For every pair of integers $m,s\ge 1$, and $b=\ceil{\log m}$ we define $\Primes_{m,s}$ to be the $1 + \ceil{s}\cdot \log_{b} m$ smallest prime numbers that are larger than $b$.
\end{definition}




\begin{algorithm}
    \caption{$\securesearch(\cost,\ell)$ \label{alg:FirstPositive-ClientSide1} }
{\begin{tabbing}
\textbf{Input:\quad} \=A vector $\cost\in \br{0,\cdots,r-1}^{m}$ and $\ell\in\br{0,\cdots,r-1}$.\\
\textbf{Output:} \>The smallest index $i^*$ such that $\cost(i^*)=\ell$\\\>
or $i^*=0$ if there is no such index.
\end{tabbing}}
\vspace{-0.3cm}
    \For {$\each$ $p\in \Primes_{m,\lceil\log m\rceil}$ \label{l:clientFirstPositive-loop}
    \label{l:securesearch-Primes}
    (See Definition~\ref{def:primes-m-s})}    {
        $b_p\gets \oneWitness_{m,p,r}(\cost, \ell)$
        \label{l:securesearch-b-p} 
        \tcc{ see Algorithm~\ref{alg:oneWitness}}
        Set $i_p\in[m]$ such that $b_p$ is the binary representation of $i_p-1$\label{l43}\\
    }
    Set $C\gets \set{i_p}_{p\in\Primes_{m,\log m}}$ 
    \label{l:securesearch-C} \\
    \tcc{$C$ is a $k$-search coreset, for $k=\card{\Primes_{m,\log m}}$, by the proof of Theorem~\ref{lem:FirstPositive-ClientSide}}
    Set $i^*\gets $ the smallest index $i^*$ in $C$ such that $\cost(i^*)=\ell$, or $i^*=0$ if there is no such index.
    \label{l:securesearch-i}\\
    \Return $i^*$
%
\end{algorithm}

\paragraph{Overview of Algorithm~\ref{alg:FirstPositive-ClientSide1}.}
%
The algorithm runs on the client side (RAM machine) and its input is a vector $\cost\in\set{0,\ldots,r-1}^m$ together with a lookup value $\ell$.
In Line~\ref{l:securesearch-Primes}, 
$\Primes_{m,\lceil \log m\rceil}$ is a set of primes that can be computed in advance.
In Line~\ref{l:securesearch-b-p}, 
the algorithm makes the following single non-interactive parallel call to the server. For every prime $p\in\Primes_{m,\lceil \log m\rceil}$, 
Algorithm~\ref{alg:oneWitness} is applied and returns a binary vector $b_p\in\zo^{\log m}$. We prove in Theorem~\ref{lem:FirstPositive-ClientSide} that the resulting set $C$ of $k=|\Primes_{m,\lceil \log m\rceil}|$ binary vectors is a $k$-search coreset; see Definition~\ref{def:sketch-firstpositive}. In particular, one of these vectors is the binary representation of $i^*-1$, where $i^*$ is the smallest entry that contains the desired lookup value $\ell$.
In Line~\ref{l:securesearch-i} 
the client checks which of the coreset items indeed contain the lookup value. The smallest of these indices is then returned as output.
We remark that if $\cost$ cannot be accessed (for example, if $\cost$ is maintained only on the server), we modify Algorithm~\ref{alg:oneWitness} to return both $b_p$ and $\cost(i_p)$ (see details in Section~\ref{sec:generic-search}). 

The following theorem is the main result of this paper, and suggests an efficient solution for the problem statement in Definition~\ref{def:search}.
\begin{theorem}[Secure Search in poly-logarithmic time]\label{lem:FirstPositive-ClientSide}
    Let $m,r\ge1$ be integers, and $\cost\in\set{0,\ldots,r-1}^m$. Let $i^*$ be the output returned from a call to $\securefirstindex(\cost,\ell)$; see Algorithm~\ref{alg:FirstPositive-ClientSide1}. Then
     \[
           i^* = \min\sett{i\in[m]}{\cost(i)=\ell}.
     \]
     Furthermore, Algorithm~\ref{alg:FirstPositive-ClientSide1} can be computed in client time $O(\log^3 m)$ and a non-interactive parallel call to $O(\log^3 m)$ polynomials, each of log-size $O(\log m + \log r)$ and degree $O(\log^4 m \log r)$. This results in an overall time of $O(\log^4 m \log r)$ (see Definition~\ref{def:secure-search}). 
\end{theorem}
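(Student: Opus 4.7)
The plan is to first establish correctness via Lemma~\ref{lem:searchserver}, then bound the complexity of the required polynomial evaluations. For correctness, the key step is showing that at least one prime $p \in \Primes_{m,\lceil \log m\rceil}$ is $A$-correct for the set $A=\{(T\cdot\ind)(j):j\in\parents(i^*)\}$ arising in Lemma~\ref{lem:searchserver}. If such a prime exists, then $b_p$ is the binary representation of $i^*-1$, so $i^*$ lies in the set $C$ computed in Line~\ref{l:securesearch-C}. Since Line~\ref{l:securesearch-i} selects the smallest index in $C$ at which $\cost$ equals $\ell$, and any such index is by definition at least $i^*$ while $i^*$ itself is in $C$, the output is exactly $i^*$. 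The corner case $i^*=0$ (meaning $\ell$ is absent from $\cost$) is handled separately: every $b_p$ decodes to some index $i_p$, but the membership test in Line~\ref{l:securesearch-i} rejects all of them.

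The heart of the argument is the existence of a good prime, which follows from a counting argument. First, $|A|\leq|\parents(i^*)|\leq\lceil\log m\rceil$ because the ancestors of any leaf in a binary tree of depth $\log m$ form a chain of that length. Second, every $a\in A$ lies in $\{0,1,\ldots,m\}$ since $a$ is a sum of a subset of entries of the binary indicator $\ind\in\{0,1\}^m$. For every non-zero $a\in A$, any $k$ distinct primes $p_1,\ldots,p_k>b=\lceil\log m\rceil$ dividing $a$ satisfy $b^k<p_1\cdots p_k\leq a\leq m$, so $k<\log_b m$. Consequently, the total number of primes in $\Primes_{m,\lceil\log m\rceil}$ that fail $A$-correctness is at most $|A|\cdot\log_b m\leq\lceil\log m\rceil\cdot\log_b m$, which is strictly less than $|\Primes_{m,\lceil\log m\rceil}|=1+\lceil\log m\rceil\cdot\log_b m$. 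Hence at least one prime in the set is $A$-correct.

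For the complexity analysis, I would invoke a prime-number-theorem (or Bertrand-type) bound to conclude that every prime in $\Primes_{m,\lceil\log m\rceil}$ is $O(\log^2 m)$, since there are only $O(\log^2 m/\log\log m)$ primes required and the $k$-th prime grows like $k\ln k$. Lemma~\ref{lem:searchserver} then gives that each call $\oneWitness_{m,p,r}(\cost,\ell)$ is realized by $\log m$ polynomials of degree $O(p^2\log r)=O(\log^4 m\log r)$ and log-size $O(\log m+\log r)$, and that all calls across all primes are issued in a single parallel batch of $|\Primes_{m,\lceil\log m\rceil}|\cdot\log m=O(\log^3 m)$ polynomials. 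The client-side work in Lines~\ref{l43}--\ref{l:securesearch-i} is $O(\log m)$ per coreset item (decoding $b_p$ and one lookup into $\cost$), totaling $O(\log^3 m)$. The overall time is the sum of the client time and the server time $O(d+\log s)$, which is dominated by the degree bound and thus equals $O(\log^4 m\log r)$.

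The main obstacle will be the counting argument for the existence of a good prime, since this is precisely what justifies the size of $\Primes_{m,\lceil\log m\rceil}$ and hence why the coreset is small enough to yield a poly-logarithmic-time protocol. The complexity bookkeeping, once the prime-size bound is in hand, is a routine application of Lemma~\ref{lem:searchserver} combined with standard prime-counting estimates.
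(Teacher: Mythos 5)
Your proposal is correct and follows essentially the same route as the paper: Lemma~\ref{lem:searchserver} for per-prime correctness, a pigeonhole/counting argument (which the paper packages as Lemma~\ref{lem:existentialCRT}) to guarantee an $A$-correct prime exists in $\Primes_{m,\lceil\log m\rceil}$, a prime-number-theorem bound on the prime magnitude, and the same complexity bookkeeping over $|\Primes_{m,\lceil\log m\rceil}|\cdot\log m = O(\log^3 m)$ polynomials. You are in fact slightly more careful than the paper's own write-up in spelling out that $A$ consists of the values $(T\cdot\ind)(j)$ (which land in $\{0,\ldots,m\}$) rather than the indices $\parents(i^*)$, which is what Lemma~\ref{lem:existentialCRT} actually requires.
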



%
%

\begin{proof}[of Theorem~\ref{lem:FirstPositive-ClientSide}]

    \paragraph{Correctness.}
    Fix $\cost\in\set{0,\ldots,r-1}^m$ and $\ell\in\set{0,\ldots,r-1}$.
    Let $i^* = \min\sett{i\in[m]}{\cost(i^*)=\ell}$, 
    and let $A=\parents(i^*)$ (see Definition~\ref{def:lop}).
    Let $b_p, i_p, C$ as defined in Algorithm~\ref{alg:FirstPositive-ClientSide1}; that is, $b_p\gets \oneWitness_{m,p,r}(\cost, \ell)$ is the binary representation of $i_p-1$ and $C=\set{i_p}_{p\in\Primes_{m,\log m}}$.

    We show that the set $C$ contains desired output $i^*$:
    By Lemma~\ref{lem:spirit-sufficient-condition}, if $p$ is $A$-correct (see Definition~\ref{def:T-A-correct}) then $b_p$ is the binary representation of $i^*-1$.
    By Lemma~\ref{lem:existentialCRT} since $\card A \le \log m$, then there exists a $p^*\in\Primes$ that is $A$-correct.
    We conclude therefore that $i^*\in C$.

    Finally, observe that as $C\subseteq[m]$, then $i^*$ being the smallest index in $[m]$ for which $\cost(i^*) = \ell$, implies that $i^*$ is also the smallest index in $C$ satisfying that $\cost(i^*) = \ell$ (or $i^*=0$ if no such index exists). Thus, the output $\min\sett{i\in C}{\cost(i)=\ell}$ is equal to 
    $i^*$.%
    %

    \paragraph{Complexity.}
    The client executing Algorithm~\ref{alg:FirstPositive-ClientSide1} first makes a non-interactive parallel call to compute $\sfirstindex_{p,m,r}$ for the $k=\card{\Primes_{m,\ceil{\log m}}}$ values $p\in\Primes_{m,\ceil{\log m}}$. Each such computation calls $O(\log m)$ polynomials (one polynomial for each bit of $b_p$), each of degree $O(p^2\log r)$ and log-size $O(\log m + \log r)$ (see Lemma~\ref{lem:searchserver}). Next the client runs in $O(k)$ time to process the returned values. 
    By Definition~\ref{def:primes-m-s}, $k = O(\log^2 m /\log\log m) = o(\log^2 m)$; By Lemma~\ref{lem:existentialCRT}, the magnitude of the primes $p\in\Primes_{m,\ceil{\log m}}$ is upper bounded by $p = O(\log^2 m)$.
    We conclude that Algorithm~\ref{alg:FirstPositive-ClientSide1} can be computed in client time is $O(k\log m) = o(\log^3 m)$, and a non-interactive parallel call to compute $k\log m = o(\log^3 m)$ polynomials, each degree $O(\log^4 m \log r)$ and log-size $O(\log m + \log r)$.
\end{proof}


Lemma~\ref{lem:existentialCRT} below states that for every set $A$ of size at most $s$, there exists a prime $p^*\in\Primes_{m,s}$ so that $p^*$ is $A$-correct.

\begin{lemma}[existential CRT sketch]\label{lem:existentialCRT}
For every integers $m,s\ge 1$, the set $\Primes_{m,s}$ (see Definition~\ref{def:primes-m-s}) satisfies the following properties.
\begin{enumerate}
  \item For every set $A\subseteq\set{0,\ldots,m}$ of size at most $s$, there exists $p^*\in\Primes_{m,s}$ that is $A$-correct.
  \item The magnitude of $p\in\Primes_{m,s}$ is upper bounded by $p = O(s\log m)$.
\end{enumerate}
\removed{\footnote{
For clarity we spell out what the above $O()$ notation means. There exists constants $c_1,c_2, c_3 > 0$ such that for every sufficiently large $m$, the following holds: For every $s\le c_1 \log m$, the set $\Primes_{m,s}$ is of size at most $\card{\Primes_{m,s}} \le c_2 \log^2 m/\log\log m$, and the magnitude of the primes $p$ in $\Primes_{m,s}$ is at most $p < c_3 \log^2 m$.
}}
\end{lemma}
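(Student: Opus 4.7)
\textbf{Proof Proposal for Lemma~\ref{lem:existentialCRT}.}

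The plan for part~1 is a counting argument over divisors. Recall $b=\ceil{\log m}$ and that every prime in $\Primes_{m,s}$ is strictly larger than $b$. Fix a nonzero $a\in A\subseteq\set{0,\ldots,m}$. I would first observe that the number of \emph{distinct} prime divisors of $a$ that exceed $b$ is at most $\log_b m$: if $p_1,\dots,p_k$ are such distinct primes, then $b^k<p_1\cdots p_k\le a\le m$, hence $k\le \log_b m$. Since $p^*\in\Primes_{m,s}$ fails to be $A$-correct exactly when $p^*$ divides some nonzero $a\in A$, the total number of ``bad'' primes in $\Primes_{m,s}$ is at most $|A|\cdot\log_b m\le s\log_b m$. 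Because $|\Primes_{m,s}|=1+\ceil{s}\cdot\log_b m > s\log_b m$, at least one prime in $\Primes_{m,s}$ is $A$-correct, proving part~1.

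For part~2 the plan is to invoke the prime number theorem to bound the largest element of $\Primes_{m,s}$. Let $k=|\Primes_{m,s}|=1+\ceil{s}\log_b m$. Since $\log_b m=\Theta(\log m/\log\log m)$, we have $k=O(s\log m/\log\log m)$. By the prime number theorem (or Chebyshev-style bounds), $\pi(N)=\Theta(N/\log N)$, so the $k$-th prime exceeding $b$ is at most $O(b+k\log k)$. Substituting and using $\log k=O(\log s+\log\log m)$ yields the bound $O(\log m+s\log m)=O(s\log m)$, as desired.

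The main obstacle is bookkeeping rather than mathematical depth: verifying that the constants in the definition of $\Primes_{m,s}$ (in particular the ``$1+$'' and the use of $\lceil s\rceil$) actually exceed $s\cdot\log_b m$ strictly, so that the pigeonhole step goes through even when $s$ is a real number or when $\log_b m$ is not an integer. I would therefore spell out the inequality $1+\ceil{s}\log_b m>s\log_b m$ explicitly. A secondary subtlety is that the elements of $A$ arising in our applications are bounded by $m$ (because they are partial sums of a $\zo$-vector of length $m$), which is precisely the hypothesis $A\subseteq\set{0,\ldots,m}$ in the lemma; this fact should be cited so that the $\log_b m$ bound on the number of large prime divisors applies.
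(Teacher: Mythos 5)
Your proposal matches the paper's proof essentially step for step: part~1 is the same pigeonhole count over prime divisors exceeding $b$ (you just make explicit the inequality $b^k < p_1\cdots p_k \le a \le m$ that underlies the bound ``at most $\log_b m$ large prime divisors''), and part~2 is the same appeal to the prime number theorem to locate the $k$-th prime above $b$ within $O(s\log m)$. The bookkeeping caveats you raise about the $1+{}$ and $\lceil s\rceil$ in Definition~\ref{def:primes-m-s} are exactly the reason the pigeonhole inequality $\card{\Primes_{m,s}} > s\log_b m$ is strict in the paper, so your concern is well placed and already handled by the definition.
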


\begin{proof}[of Lemma~\ref{lem:existentialCRT}]
  Fix $A\subseteq\set{0,\ldots,m}$ of cardinality at most $s$. We first prove that there exists $p^*\in\Primes_{m,s}$ that is $A$-correct.
  Recall that $p^*$ is $A$-correct if for every $a\in A$, $(a\mod p^*)=0$ if-and-only-if $a=0$ (see Definition~\ref{def:T-A-correct}). Clearly if $a=0$ then $(a\mod p^*)=0$. Therefore, it suffices to show that if $a\neq 0$ then $(a\mod p^*)\neq 0$. Namely, it suffices to show that there exists $p^*\in\Primes_{m,s}$ that divides none of the non-zero elements $a$ in $A$.
  %
  %
  %
  Consider an element $a$ in $A$. Observe that $a\le m$ has at most $\log_b m$ prime divisors larger than $b$, and therefore, at most $\log_b m$ divisors in $\Primes_{m,s}$. Thus, the number of elements $p$ in $\Primes_{m,s}$ so that $p$ is a divisor on an element $a\in A$ is at most $s\cdot\log_b m$. Now since $\card{\Primes_{m,s}} > s\cdot\log_b m$, then by the Pigeonhole principle there exists $p^*\in\Primes_{m,s}$ that divides none of the elements $a\in A$.


  Next we bound the magnitude of the primes $p$ in $\Primes_{m,s}$. For this purpose recall that by the Prime Number Theorem (see, \eg, in \cite{hardy75}) asymptotically we expect to find $x/\ln x$ primes in the interval $[1,x]$. Thus, we expect to find $\f x {\ln x} - \f b{\ln b} = \Omega(\f x {\ln x})$ primes in the interval $[b,x]$, where the last equality holds for every $b=o(x)$. Assign $x = t\ln t$ for $t = 1+s\log_b m$ and $b=\ceil{\log m}$. 
  For sufficiently large $m$ there are $t$ primes larger than $b$ in the interval $[b,t\ln t]$, so all the primes in $\Primes_{m,s}$ are of magnitude at most $p = O(t\ln t) = O(s \log m)$.
  \end{proof}

\begin{remark}
    To construct $\Primes_{m,s}$ we simply take the first $t = 1+s\log_b m$ primes larger than $b$. This surely gives a set of $t$ primes, though for small $t$ these primes might not contained in the interval $[b,t\ln t]$. Nonetheless, the above shows that for sufficiently large $t$, these primes are all of magnitude at most $t \ln t$. This asymptotic statement is captured by the $O()$ notation saying that the primes in $\Primes_{m,s}$ are all of magnitude $O(t\ln t)$. 
\end{remark}

\begin{figure}[!ht]
\centering
\includegraphics[width=90mm]{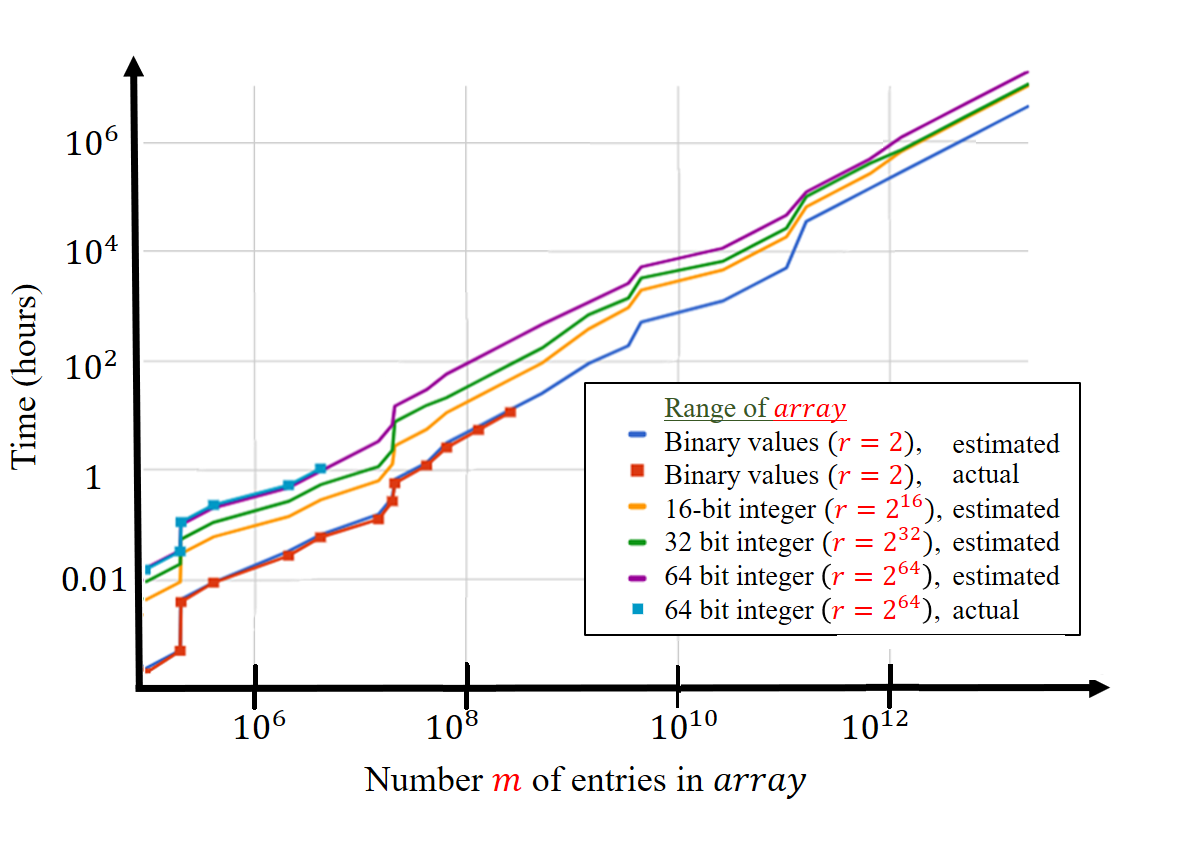}
\caption{Server's running time ($y$-axis) on a single machine on Amazon's cloud, for different database $\cost$ size ($x$-axis) of Secure Search (Algorithm~\ref{alg:FirstPositive-ClientSide1}) over encrypted database. Each colored curve represents a different range $r$ of integers. The red dots represent actual experiments, and the other curves are based on Formula~\ref{for}, which seems remarkably accurate.
}
\label{fig:spirit_graphs}
\end{figure}

%
\section{Practical Search Time Estimation Formula}\label{sec:formula}
%

When we move from theory to implementation, there are few additional factors to take into account. In this section we explain
them and give a generic but simple formula for the estimated running time 
of our algorithm, based on this more involved analysis. In the next section and in Fig.~\ref{fig:spirit_graphs} we show that indeed the formula quite accurately predicts the experimental results, at least for the configurations of our system that we checked.

We assume that the search is for a lookup value in
$\cost\in\br{0,\cdots,r-1}^m$. Our formula for the overall running time
is then
\begin{equation}
\label{for} \boxed{T = n (1+\ceil{\log_2 n}) \cdot ADD +  \lceil \log_2 r\rceil \cdot  MUL + 2n \cdot IsPOSITIVE },
\end{equation}
where
\begin{itemize}
\item $n = \cfrac{m}{CORES \cdot SIMD}$
\item $CORES$ is the number of computation machines (practically, number of
    core processors that work in parallel)
\item $SIMD$ (Single Instruction Multiple Data) is the amount of integers
    that are packed in a single ciphertext. This $SIMD$ factor is a
    function of the ring size $p = O(\log^2 n)$ and $L = \log(d+\log s) =
    O(\log_2\log_2 n)$ for $d,s$ upper bounds on the degree and size of
    evaluated polynomials; in HELib, this parameter can be read by calling
    EncryptedArray::size(), see~\cite{HaleviShoup2014helib}.
\item $ADD$,
    $MUL$, and $IsPOSITIVE$ are the times for computing a single addition,
    multiplication, and the $\isPositive$ command (see
    Algorithm~\ref{alg:ispositive}), respectively, in the contexts of
    parameters $p$ and $L$.
\end{itemize}

For example, in our system (see next section) on input parameters range $r=1$ 
and $m=255,844,736$ array entries, we have $SIMD = 122$ and $CORES=64$ resulting in $n=32,767$ packed ciphertexts;
ring size $p=17$;
and measured timings of $ADD=0.123$ms, $MUL=62.398$ms, $IsPOSITIVE=695.690$ms.
See Fig.~\ref{fig:spirit_graphs} for graph of $T$ on various $m,r$ parameter.

\paragraph{Intuition behind Formula~\eqref{for}} Theorem~\ref{lem:FirstPositive-ClientSide} states a running time that is near-logarithmic in $m$.
This is due to Definition~\ref{def:secure-search} 
that allows us to evaluate unbounded
number of polynomials in parallel, so using $m$ machines we can search all
the entries of the array in parallel. When using only $CORES \ll m$ machines,
each machine suffers a running time slowdown by a factor of $m/CORES$. 
SIMD enables parallel
computation that admits additional parallelization factor of $SIMD$, and the overall
slowdown for a machine is thus $n$ as defined above. The final time $T$ is
then the number of multiplication, additions and calls to $\isPositive$ that
are used by the SPiRiT algorithm.

\paragraph{Power of SPiRiT. }In a first look it seems that SPiRiT uses
$O(m^2)$ additions and mulitiplications, since this is the size of its
matrices $S,P,R,T$. However, since these are binary matrices, their
multiplication by a vector can be implemented by using only sum of subsets of
items with no multiplications. Moreover, these matrices are usually sparse.
Hence, the running time $T$ is near-linear in $n$ for each machine, as indeed
occurs in practice; see Fig.~\ref{fig:spirit_graphs}.

%
\section{System and Experimental Results}\label{sec:xp}
%

In this section we describe the secure search system that we implemented using the algorithms in this paper.
To our knowledge, this is the first implementation of such a search system.
Our system can roughly search 100 Gigabytes of data per day using a cloud of 1000 machines on Amazon EC2 cloud. As our experiments show, the running time reduces near-linearly with the number of machines in a rate of 100 Mega bytes per day per machine. We expect that more advanced machines would significantly improve our running times, including existing machines on Amazon EC2 cloud that use e.g. more expensive GPUs.

The system is fully open source, and all our experiments are reproducible. We hope to extend and improve the system in future
papers together with both the theoretical and practical community.

\subsection{The System}

\paragraph{System Overview. }
We implemented the algorithms in this paper into a system that maintains an encrypted database that is stored on Amazon's AWS cloud. The system gets from the client an encrypted lookup value $\ell$ to search for, and a column name $\cost$ in a database table of length $m$.
\removed{Encrypting the column name is optional, incurring a running time growth by a factor logarithmic in the number of database columns.}
The encryption is computed on the client's side using a secret key that is unknown to the server.
The client can send the request through a web-browser, that can be run e.g. from a smart-phone or a laptop.
The system then runs our secure search algorithm on the cloud, and returns a $k$-search coreset for $(\cost, \ell)$; see
Definition~\ref{def:sketch-firstpositive}.
The web browser then decrypts this coreset on the client's machine and uses it to compute the smallest index $i^*$ in $\cost$ that contains $\ell$, where $i^*=0$ if $\ell$ is not in $\cost$.
As expected by the analysis, the decoding and decryption running time on the
client side is negligible and practically all the search is done on the
server's side (cloud). Database updates can be maintained between search
calls, and support multiple users that share the same security key.

\paragraph{Hardware. }Our system is generic but in this section we evaluate it on Amazon's AWS cloud.
We use one of the standard suggested grids of EC2 {\bf x1.32xlarge} servers, each with 64 2.4 GHz Intel Xeon E5-2676 v3
(Haswell) cores and 1,952 GigaByte of RAM. Such cores are common in standard laptops.

\paragraph{Open Software and Security. }The algorithms were implemented in $C++$. HELib library~\cite{HELib} was used for the FHE
commands, including its usage of SIMD (Single Instruction Multiple Data)
technique. The source of our system is open under the GNU v3 license and can
be found in~\cite{}. Our system and all the experiments below use a security
key of 80 bits of security.
This setting can be easily changed by the client.

\subsection{Experimental Results}\label{xp:dis}


In this sub-section we describe the experiments we run on our system and explain the results.

\paragraph{Data. }We run the system on a lookup value $\ell$ in a $\cost$ of integers over two ranges: $r=2$ and
$r=2^{64}$. In the first case the vector was all zeroes except for a random
index, and in the second case we use random integers. As expected from the
analysis, the running time of our algorithm depends on the range $r$ and
length $m$ of $\cost$, but not on the actual entries or the desired lookup
value. This is since the server scans all the encrypted data anyway, as it
cannot tell when the lookup value was found.

\paragraph{The Experiment. }We run our search algorithm as described in
Section~\ref{sec:FirstPositive-Client} for vectors (database table columns)
of different length, ranging from $m=10$ to $m=100,000,000=10^8$ entries, and
for $r\in\br{2,2^{64}}$ as explained above.

\paragraph{Results.} Our experimental results for each machine on the cloud are summarized in the square points in 
Fig~\ref{fig:spirit_graphs} and also in Table~\ref{table_spirit_running_time}.
The client's decoding time was negligible in all
the experiments, so the server's time equals to the overall running time. For
example, from the graph we can see that the on a single machine we can search in a single day an array of more than
250,000,000 binary entries, and an estimate of 30,000,000 entries with values between $0$ and $2^{64}-1$.
We also added additional 6 curves that are based on the search time
Formula~\eqref{for}. As can be seen, the formula quite accurately predicts
the actual running time.

%


\paragraph{Scalability} The running time on each machine was almost identical (including the non-smooth steps; see below).
Finally, since the parallel computations on the machines are almost
independent (``embarrassingly parallel"~\cite{wilkinson1999parallel}), the
running time decreases linearly when we add more machines (cores) to the
cloud, as expected.


\paragraph{Comparison to theoretical analysis. }In Theorem~\ref{lem:FirstPositive-ClientSide} we proved that that the overall running of the search
algorithm is only poly-logarithmic in $m$ and $r$ using a parallel call to $m$ polynomials, i.e., a sufficiently large cloud.
Based on the linear scalability property above, for each single machine we thus expect to get running time that is near-linear
in $m$ and $r$. This is indeed the case as explained in the previous paragraph.

\paragraph{Why the curves are not smooth? }Each of the curves in Fig.~\ref{fig:spirit_graphs} has 4--5 non-continuous increasing steps. These
are not artifacts or noise. They occur every time where the set $\Primes$ of
primes, which depend on the length $m$, changes; see
Algorithm~\ref{alg:FirstPositive-ClientSide1}. As can be seen in the
analysis, this set changes the ring size $p$ that are used by the SPiRiT
sketch (see Algorithm~\ref{alg:FirstPositive-Server}), which in turn
increases the depth of the polynomial that realizes $\isPositive$ (see
Algorithm~\ref{alg:ispositive}), which finally increases the server running
time.

These steps are predicted and explained by the search time
Formula~\eqref{for} via the ceiling operator over the logs that make the time
formula piecewise linear.

\begin{table} [h!]
\begin{center}
\begin{tabular}{ |c|c|c| }
\hline
$m$ (vector size) &  binary (sec) & 64-bit (sec)\\
\hline
90,048        &   0.72               &    56.44\\
\hline
192,960       &   1.86               &    121.69\\
\hline
196,416       &   14.47              &    415.98\\
\hline
399,168       &   32.67              &    851.97\\
\hline
2,048,256     &   101.18             &    1,960.36\\
\hline
4,112,640     &   219.67             &    3,967.93\\
\hline
14,520,576    &   468.56             &    8,340.32\\
\hline
19,641,600    &   999.51             &    16,758.59\\
\hline
20,699,264    &   2,129.65           &    51,946.41\\
\hline
41,408,640    &   4,506.36           &    96,065.39\\
\hline
63,955,328    &   9,559.24           &\\
\hline
127,918,464   &   20,079.59          &\\
\hline
255,844,736   &   42,193.79          &\\
\hline
255,844,736   &   42,193.79          &\\
\hline
\end{tabular}
\end{center}
\caption{Server's running time of Secure Search (Algorithm~\ref{alg:FirstPositive-ClientSide1}) as measured on a single machine on Amazon's cloud for different database $\cost$ size (left column) over encrypted database. The middle column
shows the running times in seconds for a binary input vector and the right column shows the running tmes in seconds for a vector of 64-bit integers ($r=2^{64}$).}
\label{table_spirit_running_time}
\end{table}

%
\section{Extension to Generic Search}\label{sec:generic-search}
%

We discuss here extensions of our results.
First, our results extend to address searching for approximate rather than exact match, or more generally, for a generic definition of what constitutes a match to the lookup value. 
Second, while we assumed for simplicity that the input is given in binary representation, our results extends to other representations. 
Finally, the output of our algorithm can include the value $\cost(i^*)$ on top of the index $i^*$.
These extensions are discussed in the following.

\paragraph{Generic search.}
The goal of generic search is to search for a match to a lookup value $\ell$ in a (not necessarily sorted) encrypted array, where what constitute a match is defined by an algorithm $\isMatch(a,b)$ that returns $1$ if-and-only-if $a,b$ are a match and $0$ otherwise.
\begin{definition}[Generic Search\label{def:generic-search}.]
For $\isMatch(\cdot,\cdot)$ returning binary values, on the input $\cost\in \br{0,\cdots,r-1}^m$ and lookup value $\ell\in\br{0,\cdots,r-1}$, the goal of \emph{generic search} is to output
\[
i^* = \min\sett{i\in[m]}{\isMatch(\cost(i),\ell)=1}
\]
using a non-interactive algorithm.
\end{definition}

\begin{theorem}[Secure Generic Search]\label{thm:generic-search}
    Suppose $\isMatch()$ is realized by a polynomial of degree $d$ and log-size $s$. Then there is an algorithm for the generic search problem that is computed in client time $O(\log^3 m)$ and a non-interactive parallel call to $O(\log^3 m)$ polynomials, each of log-size $O(\log m + \log s)$ and degree $O(d \log^4 m)$.
\end{theorem}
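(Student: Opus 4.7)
The plan is to adapt the secure search algorithm from Theorem~\ref{lem:FirstPositive-ClientSide} by replacing only the indicator-construction step. Specifically, I would modify Algorithm~\ref{alg:oneWitness} so that Line~\ref{l:search-coreset-tobinary} no longer calls $\tobinary_{m,p,r}(\cost,\ell)$ (which realizes an equality test via a degree-$\log r$ polynomial), but instead computes a binary indicator vector $\ind\in\zo^m$ whose $i$th entry is obtained by evaluating the $\isMatch$ polynomial on $(\cost(i),\ell)$. The remainder of the construction is unchanged: apply $\sfirstindex_{m,p,r}(\ind)$ to obtain a candidate index $b_p$, and repeat over all primes $p\in\Primes_{m,\lceil\log m\rceil}$ exactly as in Algorithm~\ref{alg:FirstPositive-ClientSide1}. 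The client decodes by collecting $C=\set{i_p}_{p\in\Primes_{m,\log m}}$ and outputting the smallest $i^*\in C$ satisfying $\isMatch(\cost(i^*),\ell)=1$.

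For correctness, observe that $\ind$ is a non-negative (in fact, binary) vector whose first positive index coincides with $i^* = \min\sett{i\in[m]}{\isMatch(\cost(i),\ell)=1}$. Thus Lemma~\ref{lem:spirit-sufficient-condition} applies verbatim: if $p$ is $A$-correct for $A=\sett{(T\cdot\ind)(j)}{j\in\parents(i^*)}$, then $b_p$ is the binary representation of $i^*-1$. Since $\card{A}\le \log m$ and every $a\in A$ satisfies $a\le m$, Lemma~\ref{lem:existentialCRT} guarantees the existence of $p^*\in \Primes_{m,\lceil\log m\rceil}$ that is $A$-correct, so $i^*\in C$. The final selection step on the client then returns $i^*$ exactly.

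For complexity, the polynomial realizing the modified $\oneWitness$ is the composition of (i) $m$ parallel evaluations of $\isMatch$, contributing degree $d$ and log-size $O(\log m + s)$ per indicator entry, and (ii) $\sfirstindex_{m,p,r}$, which by Lemma~\ref{lem:spirit-sufficient-condition} has degree $O(p^2)$ and log-size $O(\log m)$. The composed degree is $O(d\cdot p^2)$, and since $p=O(\log^2 m)$ by Lemma~\ref{lem:existentialCRT}, this gives $O(d\log^4 m)$. The log-size is $O(\log m + s)$. The outer loop of Algorithm~\ref{alg:FirstPositive-ClientSide1} invokes $\card{\Primes_{m,\lceil\log m\rceil}}\cdot \log m = O(\log^3 m)$ such polynomials in a single non-interactive parallel call. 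Client-side work is dominated by scanning $|C|=O(\log^2 m)$ candidates, each requiring a check of $\isMatch$, yielding $O(\log^3 m)$ client time as claimed.

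The main subtlety is ensuring that the client can evaluate $\isMatch(\cost(i_p),\ell)$ in the final decoding step without interacting again with the server. If $\cost$ is not stored locally at the client, one resolves this by augmenting $\oneWitness$ to also return (in the same parallel call) the encrypted value $\cost(i_p)$ alongside $b_p$, using the standard technique of multiplying $\cost$ by the binary representation $b_p$ or, equivalently, the one-hot vector $t=Pu$ computed inside SPiRiT; this adds only a constant factor to the degree and size. The client then decrypts $\cost(i_p)$ and evaluates $\isMatch$ in the clear. All remaining steps are identical to the proof of Theorem~\ref{lem:FirstPositive-ClientSide}.
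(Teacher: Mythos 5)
Your proof follows the same route as the paper's (which in fact consists of a single sentence: "replace the calls to $\isEq$ in $\tobinary$ with calls to $\isMatch$"), and you supply the detailed correctness and complexity accounting that the paper leaves implicit, including the remark about returning $\cost(i_p)$ alongside $b_p$ to keep the decode step non-interactive. One small point: your derivation yields log-size $O(\log m + s)$, whereas the theorem asserts $O(\log m + \log s)$; since the hypothesis says $s$ is already the log-size of the polynomial realizing $\isMatch$, your bound is the one consistent with both the hypothesis and the paper's own size-composition rule used in Lemma~\ref{lem:spirit-sufficient-condition}, so the theorem statement most plausibly intended $s$ to denote the size (not log-size) of $\isMatch$, in which case $\log s$ is correct and matches what you computed.
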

\begin{proof}
  The algorithm for generic search is our Algorithm~\ref{alg:FirstPositive-ClientSide1}, where the only difference is in the implementation of $\tobinary$ where we replace the calls to $\isEq$ with calls to $\isMatch$.
\end{proof}

This may be useful, for example, in the context of bio-informatics DNA alignment problems, or, pattern-matching problems in general, with $\isMatch()$ returning $1$ on all entries $\cost(i)$ whose \emph{edit-distance} from the lookup value is smaller than a user defined threshold. Similarly, the distance metric for defining the desired matches may be the \emph{Hamming distance} in the context of error-correcting-codes; the \emph{Euclidean distance} for problems in computational geometry; the \emph{Root-Means-Square (RMS)} for some machine-learning problems, or any other measure of prediction-error to be minimized; etc.
Our results show that if $\isMatch$ can be computed efficient by the server, then the resulting algorithm is efficient. The former is known to hold for metrics of interest, including the hamming and edit distance~\cite{hamming15,Cheon2015}.

\paragraph{Generic input representation.}
To handle input given in non binary representation, the only change needed is in the algorithm $\tobinary$, where we replace $\isEq$ with a testing equality in the given representation. When this equality test is realized by a polynomial of degree $d$ and size $s$, then the resulting $\tobinary$ algorithm is realized by $m$ polynomials of degree $d$ and size $s$ (executed in parallel). In particular, for native representation in ring $\ZZ_p$ we can use for equality-testing the polynomial $\isZero_p(a-b)$ of degree $d = p-1$ and size $s=2$.

\paragraph{Returning value together with index.}
To return the value $\cost(i^*)$ on top of the index $i^*$ we utilize known techniques (\eg, from~\cite{DorozSunar2014pir,DorozSunar2016pir4search}) for returning $\cost(i)$ given $i$ with a linear degree polynomial. To keep the algorithm non-interactive, the search coreset items returned from Algorithm~\ref{alg:oneWitness} include both the binary representation $b$ of an index $i$ and the value $\cost(i)$; in Algorithm~\ref{alg:FirstPositive-ClientSide1} the client then outputs both $i^*$ and $\cost(i^*)$ (for $i^*$ as specified there).

\section{Conclusions and Followup Work}

In this work we show how to solve the secure search problem in overall time that is poly-logarithmic in the input array size. Our techniques can be extended to securely solving further problems in overall time in the input size, including securely returning \emph{all} matching array entries
~\cite{AFSreport}, and for secure optimization and learning~\cite{AFSmin} (for the former the time is polynomial also in the number of matching entries).

\section{Acknowledgment}
We thank Shai Halevi and Shafi Goldwasser for helpful discussions and
comments.

\bibliographystyle{alpha}
\bibliography{secure-search}

\appendix
%
\section{Preliminaries}\label{sec:prelim}
%

In this section we specify some standard definitions and simple algorithmic tools to be used as part of our algorithms.

\subsection{Definitions and Notations}

%

\begin{definition}[Polynomial\label{polydef}]
A polynomial is an expression that can be built from constants and symbols called variables by means of addition, multiplication and exponentiation to a non-negative integer power.
That is, a polynomial can either be zero or can be written as the sum of a finite number of non-zero terms, called monomials. The degree of a monomial $x_1^{a_1} \cdot x_2^{a_2} \cdot \ldots \cdot x_t^{a_t}$ in variables $x_1,\ldots, x_t$ is the sum of powers $a_1+a_2+\ldots+a_t$.

The \emph{degree $d=\deg(p)$ of a polynomial $p$} is the maximal degree over all its monomials. The \emph{size $s=\sizep(p)$} of this polynomial is the number of its monomials. The \emph{log-size} of $p$ is $\log_2s$, and its complexity is the sum $\comp(p)=O( d + \log s)$ of its degree and log-size.
\end{definition}

For example, the polynomial $p(x)=x^2 + x + 12$ has degree $\deg(p)=2$, size $\size(p)=3$ and log-size $\log_23$. Its complexity is then $\comp(p)=O(2+\log_23)=O(1)$.

While the values $x_1,x_2$ may be based on other variables, such as an input to an algorithm, the polynomials $p_1,p_2,\cdots$ in this paper are assumed to be fixed, i.e., universal constants as in the following definition.
\begin{definition}[Universal constant\label{global}] A data structure or a parameter is a universal (global) constant if it is independent of the input data. It may still depend on the size of the data.
\end{definition}
This is somehow related to the class P/poly of poly-time algorithms that get poly-length advice, where advice depends only on input size~\cite{lutz1993circuit}. We do not need to compute a universal constant as part of our main algorithm. Instead, we can compute it only once and, e.g., upload it to a public web-page.
We can then ignore its construction time in our main algorithm, by passing the constant as an additional input. A universal constant may depend on the input size such as $r$ and $m$, but not on the dynamic input values of $\ell$ or $\cost$. If the exact array length or range is unknown, or is a secret, we can use $r$ and $m$ only as upper bounds. This is also why we measure the pre-processing time it takes to compute the required universal constant, but do not add it to the overall running time of the algorithm itself.

As common in cryptography, the arithmetic operations in this paper are applied on a finite set of integers in $\br{0,\cdots,\rr-1}$, where the modulo operator is used to keep the outcome of each operation in this set. Such a set is formally called the $\ZZ_\rr$ ring. We denote the modulo of two integers $a\geq 0$ and $b\geq 1$ by $(a \mod b)=a - (b \cdot  \lfloor a/b\rfloor)$. More generally, for a vector $v\in\REAL^m$, we define $(v \mod b)$ to be a vector in $\REAL^m$ whose $i$th entry is $v(i) \mod b$ for every $i\in[m]$.

\begin{definition}[Ring $\ZZ_\rr$\label{ring}]
The \emph{ring} $\ZZ_\rr$ is the set $\br{0,\cdots,\rr-1}$ equipped with multiplication $(\cdot)$ and addition $(+)$ operations modulo $\rr$, i.e., $ a\cdot b = \big((a\cdot b) \mod \rr\big)$ and $a+b= \big((a+b) \mod \rr\big)$ for every $a,b\in\ZZ_\rr$.
\end{definition}

%
\subsection{Algorithmic Tools}
%

In this section we present simple algorithms for the following tasks: (1) Algorithm~\ref{alg:EQ-binary} tests equality of two bit-strings; (2) Algorithm~\ref{alg:ToBinaryExact}, given a vector of bit-strings $\cost$ and a bit-string $\ell$ to look for, returns the indicator vector accepting $1$ on all entries $i$ where $\cost(i)=\ell$ and $0$ otherwise; (3) Algorithm~\ref{alg:ispositive} given a vector of integers returns a binary vector with each entry $x$ mapped to $0$ if $x$ is a multiple of the underlying ring modulus $p$, and $1$ otherwise.

\subsection{Comparison of Binary Vectors}\label{sec:iszero-binary}

The comparison of two numbers given by their binary representation (which can be considered as binary vectors over $\ZZ_p$, not necessarily $p=2$) is particularly useful and simple; see Algorithm~\ref{alg:EQ-binary} and the discussion below.
\begin{algorithm}
    \label{alg:EQ-binary}
    \caption{$\isEq_t(a,b)$}
{
\begin{tabbing}
    \textbf{Parameters:} \= An integer $t\ge 1$ \\
    \textbf{Input:} \> Binary representations (vector) $a=(a_1,\ldots ,a_t)$ and $(b_1,\ldots,b_t)$ in $\zo^t$ \\ 
    \textbf{Output:} \> Return $1$ if-and-only-if $a=b$, and $0$ otherwise.
\end{tabbing}
}
\vspace{-0.3cm}
    $y \gets \prod_{i=1}^t \left(1 - (a_i-b_i)^2\right)$\\
    \Return $y$
\end{algorithm}

For two bits $a,b\in\zo$, their squared difference $(a-b)^2$ is $0$ if-and-only-if they are equal, and it is $1$ otherwise. Hence, $\isEq_1(a,b) = 1-(a-b)^2$ is a degree $2$ polynomial for the equality-test, evaluating to $1$ if-and-only-if $a=b$ and $0$ otherwise. For bit-strings $a=(a_1,\ldots ,a_t)$ and $(b_1,\ldots,b_t)$ in $\zo^t$, the equality-test simply computes the $AND$ of all bit-wise equality test with the degree $2t$ polynomial: $\isEq_t(a,b) = \prod_{i=1}^t \left(1-(a_i-b_i)^2\right)$. The correctness of the equality-test holds over the ring $\ZZ_p$ for every $p\ge 2$. Hence, the polynomial that corresponds to Algorithm~\ref{alg:EQ-binary} can be over any such ring.
%

\begin{lemma}\label{lem:eq-binary}
Let $t\ge 1$ be an integer. Let $a=(a_1,\ldots ,a_t)$ and $(b_1,\ldots,b_t)$ be a pair of binary vectors in $\zo^t$. Let $y$ be the output of a call to Algorithms $\isEq_t(a,b)$; see Algorithm~\ref{alg:EQ-binary}. Then $y=1$ if-and-only-if $a=b$, and $y=0$ otherwise. Moreover, Algorithm~\ref{alg:EQ-binary} can be realized by a polynomial of degree and log-size $2t$.
\end{lemma}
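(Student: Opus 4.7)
The plan is to separate the argument into a correctness claim and a complexity bound, both of which follow directly from expanding and analyzing the product $y = \prod_{i=1}^t (1 - (a_i - b_i)^2)$ defining $\isEq_t(a,b)$.

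For correctness, I would first argue one coordinate at a time. Since $a_i, b_i \in \{0,1\}$, the difference $a_i - b_i$ lies in $\{-1, 0, 1\}$, so $(a_i - b_i)^2 \in \{0,1\}$, with value $0$ exactly when $a_i = b_i$ and value $1$ otherwise. Thus the factor $f_i := 1 - (a_i - b_i)^2$ equals $1$ iff $a_i = b_i$ and $0$ otherwise. I would then take the product: $y = \prod_i f_i$ equals $1$ iff every $f_i$ equals $1$, i.e.\ iff $a_i = b_i$ for all $i \in [t]$, i.e.\ iff $a = b$; in the contrary case at least one $f_i$ is $0$, forcing $y = 0$. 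This argument is valid over $\ZZ_p$ for every $p \geq 2$, since the intermediate values $0$ and $1$ are distinct elements of any such ring.

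For the complexity bound, I would regard $y$ as a polynomial in the $2t$ variables $a_1,\ldots,a_t,b_1,\ldots,b_t$. Each factor $f_i = 1 - a_i^2 + 2 a_i b_i - b_i^2$ has degree $2$, so the degree of the product is $\deg(y) = 2t$. For the size, each $f_i$ has at most $4$ monomials, hence when the product is fully expanded it has at most $4^t$ monomials, giving $\sizep(y) \leq 4^t$ and therefore $\log_2 \sizep(y) \leq 2t$, as claimed.

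There is no real obstacle here: the only point requiring a moment of care is confirming that the single-coordinate indicator argument goes through over an arbitrary ring $\ZZ_p$ with $p \geq 2$ (in particular in characteristic $2$, where $(a_i - b_i)^2 = a_i + b_i$, but one can check directly on the four possible bit pairs that the factor still takes the correct value). Everything else is a direct expansion calculation.
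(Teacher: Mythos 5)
Your proof is correct and matches the paper's reasoning exactly: a single-bit indicator $1-(a_i-b_i)^2$ that is $1$ iff $a_i=b_i$, multiplied across coordinates to realize the AND, giving degree $2t$; the paper leaves the log-size bound implicit, while you make it explicit via the $4^t$-monomial count, which is a fine completion of the same argument.
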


\subsection{Reduction to Binary Vector}\label{sec:tobinary-exact} 

Algorithm $\tobinary$ reduces a given pair of vector $\cost\in\set{0,\ldots,r-1}^m$ and lookup value $\ell\in\set{0,\ldots,r-1}$ to the indicator vector $\iz\in\zo^m$ indicating for each $i\in[m]$ whether $\cost(i)$ and $\ell$ are an exact match. That is, $\iz(i)=1$ if-and-only-if $\cost(i)=\ell$.

For simplicity of the presentation we assume here that the input values $\cost(i)$ and $\ell$ are given in binary representation and that we seek exact match $\cost(i)=\ell$. Nevertheless, $\tobinary$ algorithm easily extends to handle other input representations, as well as approximate search; see Section~\ref{sec:generic-search}. 

%

\begin{algorithm}
    \label{alg:ToBinaryExact}
    \caption{$\tobinary_{m,r}(\cost,\ell)$}
{\begin{tabbing}
\textbf{Parameters:} \=Two integers $m,r\geq 1$. \\
%
\textbf{Input:} \> $\cost\in\br{0,\cdots,r-1}^m$, a lookup value $\ell\in\br{0,\cdots,r-1}$ \\
\> where values are given in binary representation using $t=O(\log r)$ bits.\\
%
\textbf{Output:} \> $\iz\in \zo^m$ such that $\iz(i) = 1$ if $\cost(i))=\ell$ and $0$ otherwise.
\end{tabbing}\vspace{-0.3cm}}
\For {$\each$ $i\in[m]$\label{l18}}{
    $\iz(i) \gets \isEq_t(\cost(i),\ell)$  \label{l19}\tcc{see Algorithm~\ref{alg:EQ-binary}.}
}
\Return $\iz$\label{l20}
\end{algorithm}

\paragraph{Algorithm Overview.}
In Lines~\ref{l18}--\ref{l19}, for each $i\in[m]$, we assign to $\iz(i)$ the outcome of the matching test $\cost(i)=\ell$. This is done via a call to Algorithm~\ref{alg:EQ-binary}. 
The output vector $\iz$ is returned in Line~\ref{l20}.

\begin{lemma}
\label{lem:tobinary}
Let $m,r\geq1$ be integers, $\cost\in\br{0,\cdots,r-1}^m$ and $\ell\in\br{0,\cdots,r-1}$.
Let $\iz$ be the output of a call to Algorithm $\tobinary_{m,r}(\cost,\ell)$; see Algorithm~\ref{alg:ToBinaryExact}.
Then $\iz$ is a binary vector of length $m$ satisfying that for all $i\in[m]$,
\[
\iz(i)=1 \text{ if-and-only-if }\cost(i)=\ell,
\]
and $0$ otherwise. Moreover, Algorithm~\ref{alg:ToBinaryExact} can be realized by $m$ independent polynomials, each of $O(\log r)$ degree and log-size.
\end{lemma}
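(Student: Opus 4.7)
The plan is to reduce correctness and complexity to the already-proved properties of $\isEq_t$ in Lemma~\ref{lem:eq-binary}, invoked entry-by-entry. Since Algorithm~\ref{alg:ToBinaryExact} is essentially an embarrassingly parallel loop, there is no structural obstacle: the bulk of the work was done in establishing the equality-test primitive, and here I just lift it from scalars to vectors.

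First I would argue correctness. For each $i\in[m]$, Line~\ref{l19} sets $\iz(i) \gets \isEq_t(\cost(i),\ell)$ where $t=\lceil\log_2 r\rceil$ is the bit-length of the binary representations of values in $\br{0,\ldots,r-1}$. Applying Lemma~\ref{lem:eq-binary} to the pair of $t$-bit vectors $(\cost(i),\ell)\in\zo^t\times\zo^t$ gives $\iz(i)=1$ if-and-only-if $\cost(i)=\ell$, and $\iz(i)=0$ otherwise. Ranging over all $i\in[m]$ yields the stated indicator property and also ensures $\iz\in\zo^m$.

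Next I would handle the polynomial realization claim. The key observation is that the iterations of the \textbf{for} loop in Lines~\ref{l18}--\ref{l19} have no cross-iteration dependencies: the computation of $\iz(i)$ reads only the $i$th entry $\cost(i)$ of the input vector together with the shared lookup value $\ell$, and writes only to the $i$th output coordinate. Hence the algorithm is realized by $m$ independent polynomials $f_1,\ldots,f_m$, where $f_i$ is exactly the polynomial realizing $\isEq_t$ on inputs $(\cost(i),\ell)$. By Lemma~\ref{lem:eq-binary}, each such polynomial has degree and log-size $2t=O(\log r)$, matching the claimed bound.

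The main (and only) subtlety is ensuring the parameter $t$ used in the call to $\isEq_t$ is the right one; this is just a matter of spelling out that $t=\lceil\log_2 r\rceil$ is the binary length of elements of $\br{0,\ldots,r-1}$, which is fixed by the input representation convention stated in Algorithm~\ref{alg:ToBinaryExact}. No further case analysis, recursion, or combinatorial argument is needed: the lemma is a direct vector-lift of Lemma~\ref{lem:eq-binary}.
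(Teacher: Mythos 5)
Your proof is correct and follows essentially the same route the paper implies: the lemma is a direct entry-by-entry lift of Lemma~\ref{lem:eq-binary}, with the loop iterations being independent, so each $\iz(i)$ is computed by one copy of the $\isEq_t$ polynomial of degree and log-size $2t = O(\log r)$. The paper treats this as immediate from Lemma~\ref{lem:eq-binary} and does not include a separate proof, so your spelled-out version matches its intent exactly.
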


\subsection{Positive vs. Zero Test} \label{ispossec}

To run SPiRiT over low-degree polynomials, we implement $\isPositive_{p,t}$ from Section~\ref{sec:spirit} via a polynomial over a ring of size $p<\infty$. In particular, Algorithm~\ref{alg:ispositive} tests whether a given integer $x$ in $\set{0,\ldots,p-1}$ is zero $(=0$) or positive ($>0$). More generally, if $x$ is a vector of $t$ entries, this test is applied for each entry of $x$.
\anote{adi's change. erased text:As explained in the proof,}
Its correctness follows immediately from the Euler's Theorem \cite{hardy75}.

\begin{algorithm}
    \label{alg:ispositive}
    \caption{$\isPositive_{p,t}(x)$}
{\begin{tabbing}
\textbf{Parameters:} \=A prime integer $p\geq 2$, and a positive integer $t\ge 1$.\\
\textbf{Input:} \>A vector $x\in\br{0,\cdots, p-1}^{t}$.\\
\textbf{Output:} \>A vector $y\in\zo^{t}$ with entries $y(i)=0$ if-and-only-if $x(i)=0$ ($1$ otherwise).
\end{tabbing}}
\vspace{-0.3cm}
        Set $y(i) \gets (x(i))^{p-1} \mod p$, for every $i\in[t]$\\
    \Return $y=(y(1),\cdots,y(t))$
\end{algorithm}
%

\begin{lemma}\label{lem:ispositive}
Let $p\geq2$ be a prime, $t\ge 1$ an integer, and $x\in\br{0,\cdots,p-1}^t$. Let $y$ be the output of a call to $\isPositive_{p,t}(x)$; see Algorithm~\ref{alg:ispositive}. Then $y\in\zo^t$ satisfies that for every $i\in[t]$, $y(i)=0$ if-and-only-if $x(i)=0$. Moreover, Algorithm~\ref{alg:ispositive} can be realized by a parallel call to $t$ independent polynomials, each of degree $p-1$ and size $1$, resulting in server time $O(p)$.
\end{lemma}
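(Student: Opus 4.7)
The plan is to invoke Fermat's Little Theorem, which is the prime case of Euler's Theorem already cited by the algorithm description. First I would split into two cases for each entry $x(i)\in\set{0,\ldots,p-1}$. If $x(i)=0$, then $y(i) = 0^{p-1} \bmod p = 0$ trivially, so the ``only if'' side of the equivalence goes through immediately.

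Next I would handle the case $x(i)\neq 0$. Since $x(i)\in\set{1,\ldots,p-1}$ and $p$ is prime, we have $\gcd(x(i),p)=1$, hence Fermat's Little Theorem gives $x(i)^{p-1}\equiv 1\pmod p$, so $y(i)=1$. Combining the two cases, $y(i)=0$ iff $x(i)=0$, and moreover $y\in\zo^t$ as claimed.

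For the complexity part, I would observe that for each fixed $i$, the map $x(i)\mapsto x(i)^{p-1}$ is a single monomial in the variable $x(i)$: its degree is exactly $p-1$ and its size (number of monomials) is $1$, hence log-size $0$. Since the computation of $y(i)$ depends only on $x(i)$ and not on any other coordinate, the $t$ entries are produced by $t$ independent polynomials that can be evaluated in parallel. Applying Definition~\ref{polydef}, the complexity of each such polynomial is $O(\deg + \log\sizep) = O((p-1) + 0) = O(p)$, which is the stated server time.

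I do not expect a hard obstacle here: the whole lemma is essentially a restatement of Fermat's Little Theorem together with a direct reading of Algorithm~\ref{alg:ispositive} as a single-monomial polynomial. The only minor point worth being careful about is that $p$ must be prime (not merely coprime to the $x(i)$'s) for the exponent $p-1$ to work uniformly for every nonzero residue; this is exactly the hypothesis of the lemma, so nothing further is needed.
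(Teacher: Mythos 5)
Your proof is correct and follows essentially the same route as the paper, which simply cites Euler's Theorem for correctness of Algorithm~\ref{alg:ispositive}; you make the argument explicit by invoking Fermat's Little Theorem (the prime case of Euler's Theorem) in a two-case split and then read off degree, size, and server time directly from Definition~\ref{polydef} and Definition~\ref{def:secure-search}.
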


%
%


%
\section{SPiRiT over $\REAL$: Proof of Lemmas \ref{lem:roots} and Theorem~\ref{lem:firstpositive-reals}}
\label{sec:analysis-first-prositive-reals}
%

In this section we prove Theorem \ref{lem:firstpositive-reals} showing that $\sfirstindex_{m,\infty,r}(x)$ returns the first positive index of $x$. A part of this proof is the analysis of our implementation of the roots matrix (see Claim~\ref{claim:analysis-root-reals}) which provides a proof for Lemma~\ref{lem:roots}.
%

\begin{proof}[of Theorem~\ref{lem:firstpositive-reals}]
The proof follows immediately from the claims below showing that analysis of each individual component of $\spirit$.
Specifically, by the claims below, for every $x\in\REAL^m$, the following holds:
\begin{itemize}
  \item
  $w=Tx$ is the tree-representation of $x$.

  \item
  $w' = \isPositive_{\REAL,2m-1}(w)$ satisfies that $w'(k)=1$ if-and-only-if there is a leaf with non-zero key among the leaves rooted at node (indexed by) $k$.

  \item
  $v = Rw'\in \REAL^m$ satisfies that: (i) $v\in\set{0,\ldots,d}^m$ for $d = O(\log m)$ the depth of the tree representation of $x$, and (ii) $v(j)=0$ if-and-only-if $\sum_{i=0}^j x(i)=0$. Moreover, (iii) the rows of $R$ are $d$-sparse (\ie, each row has at most $d$ non-zero entries).

  \item
  $u = \isPositive_{\REAL,m}(v) \in\zo^m$ represents a step-function $(0,\ldots,0,1,\ldots,1)$ whose first $1$ value is at entry $i^*$, for $i^*$ the first positive entry of $x$.

  \item
  $Pu\in\zo^m$ has a single non-zero entry at $i^*$.

  \item
  $St \in\zo^{\log m}$ is the binary representation of the index $i^*$ for the unique non-zero entry in $t$.
\end{itemize}
We conclude therefore that output $b=\spirit\cdot x$ is the binary representation of the first positive entry in $x$, as required.
\end{proof}

\begin{claim}[Tree Matrix $T\in\zo^{(2m-1)\times m}$]
For every $x\in\REAL^m$, $Tx \in \REAL^{2m-1}$ is the tree-representation of $x$ (see Definition~\ref{tree}).
\end{claim}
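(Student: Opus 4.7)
The plan is to unpack the definition of $T$ and the definition of the tree representation, then match them term-by-term.

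First I would establish a key structural lemma by induction on the depth of the subtree: for every node $u$ of $\tree(x)$, the label assigned to $u$ equals $\sum_{j \in L(u)} x(j)$, where $L(u)\subseteq[m]$ denotes the set of indices of leaves in the subtree rooted at $u$. The base case is a leaf $u$, where by definition the label is $x(i)$ for the unique $i$ with $L(u)=\{i\}$. For the inductive step at an inner node $u$ with children $u_L,u_R$, Definition~\ref{tree} sets the label of $u$ to the sum of its children's labels, which by induction equals $\sum_{j\in L(u_L)} x(j) + \sum_{j\in L(u_R)} x(j) = \sum_{j\in L(u)} x(j)$, since $L(u)=L(u_L)\sqcup L(u_R)$.

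Next I would unpack $(Tx)(k)$ directly from the construction of $T$. By the definition of $T$ in Section~\ref{sec:spirit-construction}, the $k$-th row has $T(k,j)=1$ if the node indexed by $k$ (in the array representation) is an ancestor of the $j$-th leaf, and $T(k,j)=0$ otherwise. Therefore
\[
(Tx)(k) \;=\; \sum_{j\in[m]} T(k,j)\cdot x(j) \;=\; \sum_{j\,:\, k\in\parents(j)} x(j).
\]
The condition ``$k$ is an ancestor of leaf $j$'' is exactly the condition ``$j\in L(u_k)$'', where $u_k$ is the node of $\tree(x)$ corresponding to index $k$ (and we treat a node as its own ancestor so that leaves are included). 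Hence $(Tx)(k)=\sum_{j\in L(u_k)} x(j)$, which by the inductive lemma above equals the label of $u_k$.

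Finally, by Definition~\ref{tree} the $k$-th entry of the array representation of $\tree(x)$ is precisely the label of $u_k$: $w(1)$ is the root label, and the recursive rule $w(2j),w(2j+1)$ being the labels of the children of $w(j)$ enumerates all nodes consistently with the indexing used by $T$. Comparing term by term, $(Tx)(k)$ equals the $k$-th entry of $w$ for every $k\in[2m-1]$, so $Tx=w$ is the tree representation of $x$. The main thing to be careful about is keeping the node-indexing of $T$'s rows consistent with the array-representation indexing of $w$; once that bookkeeping is stated cleanly, the rest is a direct calculation.
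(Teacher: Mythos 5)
Your proof is correct and follows essentially the same approach as the paper's: identify each row of $T$ as the indicator vector of leaves beneath the corresponding node, so $(Tx)(k)$ is the sum of leaf values under that node, which equals the node's label. The paper states the last step (label $=$ sum of leaf values) without justification, whereas you spell out the induction on subtree depth; that's the only difference, and it's just added rigor rather than a different route.
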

\begin{proof}
    The Tree Matrix $T\in\zo^{(2m-1)\times m}$: Fix $x\in\REAL^m$, and a row $k$ in $T$. The $k$th row corresponds to a node $u$ in the tree representation and is the indicator vector for the leaves rooted at $u$, thus the inner product of this row with vector $x$ gives the sum of keys over the leaves rooted at $u$. Namely, $w = Tx \in \REAL^{2m-1}$ is the tree-representation of $x$.
\end{proof}

\begin{claim}[$\isPositive$ operator on $Tx$]
For every $x\in\RR^m$ and $w=Tx\in\RR^{2m-1}$, 
$w' = \isPositive_{\REAL,2m-1}(w) \in\zo^{2m-1}$ satisfies that $w'(k)=1$ if-and-only-if there is a leaf with non-zero key among the leaves rooted at node (indexed by) $k$.
\end{claim}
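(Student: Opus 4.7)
The plan is to combine the previous claim (identifying $w=Tx$ with the tree representation of $x$) with the defining property of $\isPositive_{\REAL,2m-1}$ and the non-negativity of the input vector assumed in Theorem~\ref{lem:firstpositive-reals}. (Although the claim as stated writes $x\in\REAL^m$, the surrounding theorem restricts to $x\in[0,\infty)^m$; without this restriction the statement fails because positive and negative leaf values could cancel along the path toward the root.) I would begin the proof by making this non-negativity assumption explicit.

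First I would invoke the preceding claim: by construction of the tree matrix $T$, for every node index $k\in[2m-1]$ the entry $w(k)=(Tx)(k)$ equals the sum of the labels $x(i)$ over all leaves $i$ whose ancestor set contains $k$, i.e.\ over exactly the leaves of the subtree rooted at node $k$. Formally, $w(k)=\sum_{i\,:\,k\in\parents(i)} x(i)$.

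Next I would apply the definition of the operator $\isPositive_{\REAL,2m-1}$ given in Section~\ref{sec:spirit}: applied entrywise, it maps a real value to $0$ when the value equals $0$, and to $1$ otherwise. Hence $w'(k)=1$ if-and-only-if $w(k)\neq 0$.

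Finally, since $x\in[0,\infty)^m$, the sum $w(k)=\sum_{i\,:\,k\in\parents(i)} x(i)$ is a sum of non-negative reals, and such a sum vanishes if-and-only-if every summand vanishes. Thus $w(k)\neq 0$ if-and-only-if at least one leaf $i$ in the subtree rooted at node $k$ satisfies $x(i)\neq 0$. Combining this with the previous sentence yields the desired equivalence, completing the proof. The argument is essentially a direct unfolding of definitions; the only subtlety to watch for is the non-negativity assumption, which is what rules out cancellations and makes the ``sum is zero iff every summand is zero'' step valid.
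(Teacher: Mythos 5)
Your proof is essentially the same as the paper's: both unfold the definition of $\isPositive_{\REAL,2m-1}$ and the fact that $w=Tx$ is the tree representation, so $w(k)$ is the sum of leaf labels in the subtree rooted at $k$. You correctly flag that the non-negativity of $x$ is what rules out cancellations and makes the ``sum is zero iff every summand is zero'' step valid; the paper's proof glosses over this (and the claim as stated with $x\in\REAL^m$ is in fact imprecise), relying implicitly on the hypothesis $\cost\in[0,\infty)^m$ from Theorem~\ref{lem:firstpositive-reals}, so your explicit remark is a genuine improvement in rigor.
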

\begin{proof}
    The $\isPositive_{\REAL,2m-1}$ operator applied on $Tx$: Fix $x\in\RR^m$ and $w=Tx\in\RR^{2m-1}$ the tree representation of $x$. Let $w' = \isPositive_{\REAL,2m-1}(w)\in\zo^{2m-1}$. Then, 
    $w'(k)=1$ if-and-only-if $w(k)\neq 0$, where by the definition of tree representation, the latter holds if-and-only-if among the leaves rooted at (node indexed by) $k$ there is a leaf with non-zero key.
\end{proof}

\begin{claim}[Roots Sketch $R\in\zo^{m\times (2m-1)}$ with our suggested implementation in Lemma~\ref{lem:roots}]\label{claim:analysis-root-reals}
For every $x\in\REAL^m$ and $w'=iTx\in\zo^{2m-1}$, $v = Rw'\in \REAL^m$ satisfies that:
(i) $v\in\set{0,\ldots,d}^m$ for $d = \log m$ the depth of the tree representation of $x$; (ii) the rows of $R$ are $d$-sparse (\ie, each row has at most $d$ non-zero entries); and (iii) $v(j)=0$ if-and-only-if $\sum_{i=1}^j x(i)=0$.
\end{claim}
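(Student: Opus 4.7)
The plan is to unpack the definition of $R$ from Lemma~\ref{lem:roots}, so that for every $j\in[m]$,
\[
v(j) \;=\; (Rw')(j) \;=\; \sum_{\ell\in\lop(j+1)} w'(\ell),
\]
and then read off each of the three properties from the combinatorics of the set $\lop(j+1)$ of indices of left-siblings of the ancestors of the $(j+1)$th leaf (Definition~\ref{def:lop}).

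For (ii), I would argue that in a balanced binary tree on $m$ leaves the set $\parents_{\tree}(j+1)$ of ancestors of any fixed leaf has at most $d=\log_2 m$ elements, and each ancestor contributes at most one node to $\lop_{\tree}(j+1)$ (its left sibling, and only when the ancestor itself is a right child). Hence $|\lop(j+1)|\le d$, which says that the $j$th row of $R$ has at most $d$ ones, i.e.\ $R$ is $d$-sparse. Property (i) then falls out immediately: since $w'\in\{0,1\}^{2m-1}$, the sum over at most $d$ binary entries lies in $\{0,\ldots,d\}$.

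The core step is (iii), and this is where I expect the main (though still standard) work to lie: identifying the correct combinatorial meaning of the left-siblings. First I would recall that $w=Tx$ is the tree representation of $x$, so $w(\ell)$ equals the sum of leaf values $x(i)$ over the leaves $i$ descending from node $\ell$ (first bullet of the preceding claims in the proof of Theorem~\ref{lem:firstpositive-reals}). Using non-negativity of $x$ (implicit from the input assumption $\cost\in[0,\infty)^m$ of Algorithm~\ref{alg:FirstPositive-Server}), $w(\ell)=0$ if and only if $x(i)=0$ for every leaf $i$ under $\ell$; and by definition of $\isPositive_{\infty,\,2m-1}$ we have $w'(\ell)=0$ iff $w(\ell)=0$, so $w'(\ell)=0$ iff all leaves under $\ell$ vanish.

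It then remains to establish the partition lemma: as $\ell$ ranges over $\lop_{\tree}(j+1)$, the leaf-sets of the subtrees rooted at $\ell$ form a disjoint partition of the first $j$ leaves $\{1,\ldots,j\}$. I would prove this by walking the root-to-leaf path to leaf $j+1$: at every step where we descend from a node to its \emph{right} child, the sibling left subtree is exactly the block of leaves sitting immediately to the left of the still-undecided interval, and these blocks, ordered by depth, tile $\{1,\ldots,j\}$ without overlap. Combining the partition with the previous paragraph,
\[
v(j) \;=\; \sum_{\ell\in\lop(j+1)} w'(\ell) \;=\; 0
\quad\Longleftrightarrow\quad
x(1)=\cdots=x(j)=0
\quad\Longleftrightarrow\quad
\sum_{i=1}^{j} x(i)=0,
\]
where the last equivalence again uses $x\ge 0$. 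This gives (iii), and together with (i) and (ii) completes the claim; Lemma~\ref{lem:roots} is then an immediate consequence, since the identity $v(j)=\sum_{k=1}^{j}x(k)$ claimed there is read off by the same partition argument applied to $w$ in place of $w'$ (without the $\isPositive$ layer).
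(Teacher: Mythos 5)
Your proof plan is correct and follows essentially the same route as the paper's: the row-sparsity bound $\card{\lop(j+1)}\le d$, the range $\{0,\ldots,d\}$ from summing at most $d$ binary entries, and the partition of the leaves $\{1,\ldots,j\}$ by the left-sibling subtrees, all appear in the paper's argument. One small point in your favor: you explicitly flag that part (iii) hinges on $x$ being non-negative (so that $\sum_{i=1}^j x(i)=0$ forces every $x(i)=0$, and so that $\sum_{k\in\lop(j+1)} Tx(k)\neq 0 \Leftrightarrow \sum_{k\in\lop(j+1)} iTx(k)\neq 0$), a hypothesis the claim statement omits and the paper uses only implicitly.
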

\begin{proof}
    The Roots Sketch $R\in\zo^{m\times (2m-1)}$: Let $v = RiTx$ and fix an entry $j\in[m]$.
    To prove (i)-(ii), observe that
    \[
    \card{\lop(j+1)}\le d
    \]
    because a leaf in a full binary with $m=2^d$ leaves has at most $d+1$ ancestors, and at most $d$ of them are left-siblings, as the root is excluded.
    Now, since $iTx(k)\in\zo$ we conclude that $v(j) = \sum_{k\in\lop(j+1)} iTx(k)$ accepts values in $\set{0,\ldots,d}$; and since the number of $1$ entries in each row $j\in[m]$ of $R$ is $\card{\lop(j+1)}$ we conclude that $R$ is $d$-sparse.

    To prove (iii), 
    first observer that
    \[
    \sum_{k\in\lop(j+1)} Tx(k) = \sum_{i=1}^j x(i)
    \]
    because the left-siblings (corresponding to indexes) in $\lop(j+1)$ partition the leaves $x(1),\ldots,x(j)$ into disjoint sets (with a set for each left-sibling, containing all leaves $x(i)$ for which the left-sibling is an ancestor), and $Tx(k)$ is the sum of values $x(i)$ over all leaves for which that left-sibling (indexed by) $k$ is an ancestor.
    Next observe that by definition of the $\isPositive$ operator $i()$,
    \[
    \sum_{k\in\lop(j+1)} Tx(k)\neq 0\text{\ \ \ \ if-and-only-if \ \ \ \ }\sum_{k\in\lop(j+1)} iTx(k) \neq 0.
    \]
    \details{because $\sum_{k\in\lop(j+1)} Tx(k) \neq 0$ if-and-only-if $Tx(k)\neq 0$ for some $k\in\lop(j+1)$, which in turn holds if-and-only-if $iTx(k)=1$ and $\sum_{k\in\lop(j+1)} iTx(k) \neq 0$.\footnote{Looking ahead, this holds also under finite ring $p>\card{\lop(j+1)}=O(\log m)$.}}
    Finally, since $v(j) = \sum_{k\in\lop(j+1)} iTx(k)$, we conclude that $v(j) \neq 0$ if-and-only-if $\sum_{i=1}^j x(i)\neq 0$.
\end{proof}

\begin{claim}[$\isPositive$ operator on $v = RiTx$]\label{claim:iRiT-reals}
For every $x\in\REAL^m$ and $v=RiTx$, 
$u = \isPositive_{\REAL,m}(v) \in\zo^m$ 
represents a step-function $(0,\ldots,0,1,\ldots,1)$ whose first $1$ value is at entry $i^*$, for $i^*$ the first positive entry of $x$.
\end{claim}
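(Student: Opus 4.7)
The plan is to chain the conclusion of Claim~\ref{claim:analysis-root-reals}(iii) with the definition of $\isPositive_{\REAL,m}$ and the non-negativity of $x$. Specifically, Claim~\ref{claim:analysis-root-reals}(iii) already tells us that $v(j) = 0$ if and only if $\sum_{i=1}^j x(i) = 0$, and by the definition of the operator $\isPositive_{\REAL,m}$ we have $u(j) = 0$ iff $v(j) = 0$. So it suffices to characterize the zero pattern of the prefix sums of $x$.

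Here the hypothesis that $x \in [0,\infty)^m$ is essential: because all summands are non-negative, $\sum_{i=1}^j x(i) = 0$ is equivalent to $x(1) = \cdots = x(j) = 0$. I would then split into the two cases dictated by the definition of the first positive index $i^*$ (Definition~\ref{def:firstpositive}):
\begin{itemize}
  \item If $j < i^*$, then by minimality of $i^*$ we have $x(1) = \cdots = x(j) = 0$, so $v(j) = 0$ and hence $u(j) = 0$.
  \item If $j \ge i^*$ (and $i^* \ge 1$), then $x(i^*) > 0$ is one of the summands in $\sum_{i=1}^j x(i)$, and the remaining summands are non-negative, so the sum is strictly positive; thus $v(j) \neq 0$ and $u(j) = 1$.
\end{itemize}
This exactly yields the step-function pattern $u = (0,\ldots,0,1,\ldots,1)$ whose first $1$ appears at coordinate $i^*$.

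Finally, I would handle the edge case $i^* = 0$ (i.e., $x \equiv 0$) separately: here all prefix sums vanish, so $v \equiv 0$ and $u \equiv 0$, which is consistent with the convention that there is no first positive index. No step is really the hard part; the only substantive point is invoking non-negativity of $x$ to convert the condition ``prefix sum equals zero'' into ``all prefix entries are zero'', which is what lets the $\isPositive$ operator detect the threshold $i^*$ rather than merely some cancellation phenomenon — and this is precisely why the argument is presented over the reals here, deferring the more delicate modular-arithmetic version to the later sections.
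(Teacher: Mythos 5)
Your proof is correct and follows essentially the same route as the paper's: chain Claim~\ref{claim:analysis-root-reals}(iii) with the definition of $\isPositive_{\REAL,m}$ and then read off the step-function pattern. You are somewhat more careful than the paper in flagging exactly where non-negativity of $x$ enters (the claim's hypothesis is stated as $x\in\REAL^m$, but the argument only works for $x\in[0,\infty)^m$, which is the setting of Theorem~\ref{lem:firstpositive-reals}); this is a worthwhile clarification but not a different approach.
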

\begin{proof}
    The $\isPositive_{\REAL,m}$ operator applied on $v = RiTx$: Fix every $v\in\REAL^m$. The vector $u = \isPositive_{\REAL,m}(v) \in\zo^m$ satisfies that $u(j)=1$ if-and-only-if $v(j)\neq 0$. In particular, for $v\in\set{0,\ldots,d}$ s.t. $v(j)=0$ if-and-only-if $\sum_{k=0}^j x(k)=0$, $u(i)=0$ for all $i<i^*$, and $u(i)=1$ for all $i\ge i^*$.
\end{proof}

\begin{claim}[Pairwise Sketch $P\in\zo^{m\times m}$]
For every $u\in\zo^m$ representing a step function $(0,\ldots,0,1,\ldots,1)$ with $1$ value at entry $i^*$, $Pu\in\zo^m$ has a single non-zero entry at $i^*$.
\end{claim}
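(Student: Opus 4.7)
The plan is to prove this by directly applying the definition of the pairwise difference matrix $P$ to the specific step-function input $u$. Recall from Section~\ref{sec:spirit-construction} that $P\in\{-1,0,1\}^{m\times m}$ is constructed so that $t = Pu$ satisfies $t(k) = u(k) - u(k-1)$ for $k\in\{2,\ldots,m\}$ and $t(1) = u(1)$, which we can uniformly write as $t(k) = u(k) - u(k-1)$ under the convention $u(0) = 0$. Once this formula is in hand, the claim reduces to a case analysis on the index $k$ relative to $i^*$.

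First I would handle the three cases. For $k < i^*$, both $u(k)$ and $u(k-1)$ equal $0$ (since $i^*\geq 2$ in this range and by the convention $u(0)=0$), so $t(k) = 0 - 0 = 0$. For $k = i^*$, we have $u(k) = 1$ and $u(k-1) = 0$ (either because $k-1 < i^*$ or because $k = 1$ so $u(0) = 0$), giving $t(i^*) = 1 - 0 = 1$. For $k > i^*$, both $u(k)$ and $u(k-1)$ equal $1$, so $t(k) = 1 - 1 = 0$. Assembling the three cases, $t\in\{0,1\}^m$ is the standard basis vector $e_{i^*}$, which is exactly the stated conclusion: a single non-zero entry, equal to $1$, located at position $i^*$.

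There is no real obstacle here; the work was already done in specifying the matrix $P$ so that its action is the discrete first difference. The only minor bookkeeping point is to make the boundary convention $u(0) = 0$ explicit so that the case $k = 1$ (which would arise if $i^* = 1$) is handled uniformly with the others. I would also remark that the output lies in $\{0,1\}^m$ rather than merely in $\{-1,0,1\}^m$ because $u$ is monotone non-decreasing, so no $-1$ differences are produced; this is what allows the subsequent application of $S$ in Claim~\ref{claim:SP-p} to recover the binary representation of $i^*-1$ directly.
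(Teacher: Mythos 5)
Your proof is correct and follows essentially the same route as the paper's: apply the discrete-difference formula $t(k)=u(k)-u(k-1)$ (with $u(0)=0$) to the step function and observe that the difference vanishes except at the transition index $i^*$, where it equals $1$. The only difference is that you spell out the three cases explicitly and add the useful side remark that monotonicity of $u$ keeps the output in $\{0,1\}^m$, which the paper leaves implicit.
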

\begin{proof}
        Pairwise Sketch $P\in\zo^{m\times m}$: Observe that $Pu(i) = u(i)-u(i-1)$ is the discrete derivative of $u$, which is equal to zero on every entry $i$ where $u(i)=u(i-1)$ and non-zero otherwise. For $t$ a vector representing a step-function with transition from $0$ to $1$ values at entry $u(i^*)=1$, we get the value $Pu(i^*)=1-0=1$ on $i^*$ and values $Pu(i)=u(i)-u(i-1)=0$ for all $i\neq i^*$.
\end{proof}

\begin{claim}[The Sketch $S\in\zo^{\log m\times m}$]
For every $1$-sparse $t \in\zo^m$, 
then $St \in\zo^{\log m}$ is the binary representation of the index $i^*$ for the unique non-zero entry in $t$.
\end{claim}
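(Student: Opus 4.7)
The plan is to unfold the definition of the sketch matrix $S$ and use the $1$-sparsity of $t$ to reduce $St$ to a single column of $S$, which by construction encodes the desired binary representation.

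Recall from the discussion following Definition~\ref{sketch} that $S \in \zo^{\log m \times m}$ is constructed by setting its $k$-th column $S_{\cdot,k}$ to be the binary representation (in $\zo^{\lceil \log_2 m \rceil}$) of $k-1$, for every $k \in [m]$. First I would spell out this column structure explicitly and observe that $S$ is then a valid $(1,m)$-sketch matrix in the sense of Definition~\ref{sketch}: the decoding algorithm $\Decodesketch$ simply reads off the index whose binary encoding equals the input vector.

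Next, given $t \in \zo^m$ that is $1$-sparse with unique non-zero entry $t(i^*) = 1$, the key step is the standard identity
\[
S t \;=\; \sum_{k=1}^{m} t(k)\cdot S_{\cdot,k} \;=\; t(i^*)\cdot S_{\cdot,i^*} \;=\; S_{\cdot,i^*},
\]
which extracts exactly the $i^*$-th column of $S$. By the construction of $S$ recalled above, $S_{\cdot,i^*}$ is the binary representation of $i^*-1$. Hence $St$ is the binary representation of the (unique) index $i^*$ carrying the non-zero entry of $t$ (up to the $-1$ offset consistent with Claim~\ref{claim:SP-p} and the convention from Algorithm~\ref{alg:FirstPositive-Server}).

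There is no real obstacle here beyond care with the off-by-one convention: the output encodes $i^*-1$, matching how $SP$ is used in the sufficient-condition analysis of $\sfirstindex_{m,p,r}$. I would also briefly note that the same argument works identically when the ring is $\ZZ_p$ for any $p \geq 2$, since no cancellation or overflow can occur in the sum: only one summand is non-zero and its value equals $1$. This makes the claim usable both in the idealized Section~\ref{sec:spirit} setting and inside the modular pipeline of Lemma~\ref{lem:spirit-sufficient-condition}.
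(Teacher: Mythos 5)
Your proof is correct and follows essentially the same route as the paper's: expand $St$ as a linear combination of columns of $S$, use $1$-sparsity to isolate the $i^*$-th column, and read off its binary-encoding construction. You are in fact slightly more careful than the paper's own write-up in flagging that the encoded value is $i^*-1$ (not $i^*$, as the claim statement loosely says), which matches the convention used in Algorithm~\ref{alg:FirstPositive-Server} and Claim~\ref{claim:SP-p}; the added remark that no modular cancellation can occur over $\ZZ_p$ is also a worthwhile observation the paper leaves implicit.
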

\begin{proof}
    The Sketch $S\in\zo^{\log m\times m}$: Proof follows immediately from taking $S$ to be a $(1,m)$-sketch. It is easy to see that the specified matrix is indeed a $(1,m)$-sketch: Note that the first column of $S$ is the binary representations of $0$, the next column is the binary representation of $1$, and so forth, to the last column specifying the binary representation of $m$. For $t$ with a single non-zero entry at entry $i^*$, the product $St$ is simply the $i^*$-th column of $S$, which is in turn simply the binary representation of the index $i^*$.
\end{proof}

\end{document}